\newtheorem{theorem}{Theorem}
\newtheorem{corollary}{Corollary}
\newtheorem{lemma}{Lemma}
\newtheorem{proposition}{Proposition}
\theoremstyle{remark}
\newtheorem*{remark}{Remark}
\theoremstyle{definition}
\newtheorem{definition}{Definition}
\title{Sorting an Array Using the Topological Sort \\ of a Corresponding Comparison Graph}
\date{June 2020}
\begin{document}

\sloppy

\twocolumn[



\begin{center}
    \LARGE{\bf Sorting an Array Using the Topological Sort \\ of a Corresponding Comparison Graph}

    \vspace{1.5em}

    \large{\bf June 2020}

    \vspace{1.5em}

    \normalsize{\textbf{Balaram D. Behera}}
    \vspace{0.25em} \\
    \normalsize{\texttt{bbehera@ucsc.edu}}
    \vspace{0.25em} \\
    \normalsize{Computer Science \&} \\ \normalsize{Engineering Department}
    \vspace{0.25em} \\
    \normalsize{University of California} \\
    \normalsize{Santa Cruz}

    \vspace{1em}
\end{center}

\begin{center}
    \subsection*{Abstract}
\end{center}
\begin{adjustwidth}{55pt}{55pt}
    \small{The quest for efficient sorting is ongoing, and we will explore a graph-based stable sorting strategy, in particular employing comparison graphs.
    We use the topological sort to map the comparison graph to a linear domain, and we can manipulate our graph such that the resulting topological sort is the sorted array.
    By taking advantage of the many relations between Hamiltonian paths and topological sorts in comparison graphs, we design a Divide-and-Conquer algorithm that runs in the optimal $O(n \log n)$ time.
    In the process, we construct a new merge process for graphs with relevant invariant properties for our use.
    Furthermore, this method is more space-efficient than the famous {\sc MergeSort} since we modify our fixed graph only.}

    \noindent \emph{Keywords: Graph Algorithms; Topological Sort; Sorting Algorithms; Comparison Graphs.}
\end{adjustwidth}

\vspace{2em}
]

\section{Introduction}
Most sorting algorithms run a multitude of array comparisons, and from those results we decide how to manipulate the elements to eventually achieve a sorted ordering of elements.
This process can be implemented and infused with all kinds of data structures.
In particular, directed graphs are a great way to structure this problem, and this allows us to look at sorting in different light and realize new methods of sorting.

We can represent every element as a vertex and the result of every comparison as an arc.
Thus we can construct a graph that essentially stores all the comparisons made.
In fact, we can mathematically represent these comparisons as an order relation: construct an arc if and only if the origin is less than the terminus (in the case of distinct array elements). Now we must decipher such a graph, i.e. to find some meaning to all those arcs that we have created.
Our end goal is to achieve a sort of our input array, and in this paper we plan to achieve this using the topological sort of the graph (defined later).

We will employ basic ideas of graph theory, the Depth-First Search algorithm, and the topological sort to efficiently sort an array using directed graphs.
We will first explore a somewhat trivial way to solve this problem, and then build a more efficient algorithm that will give us an equivalent result.

\section{Previous Work}
Rajat K. Pal in his first paper on this topic discussed his {\sc RKPianGraphSort} algorithm which was a perfect graph-based sorting algorithm which was more extensive than what we present here and it ran in time $O(n^2)$ \cite{rkp1}. In his second paper \cite{rkp2}, Pal introduced a complete graph-based sorting algorithm which we develop here, calling it the Trivial Algorithm, yet this algorithm also runs in $O(n^2)$ time \cite{rkp2}.

The main limitation in previous work has been the complete graph construction required to employ the topological sort effectively, so in this paper we counter that with our intermediary merge processes to enhance the run time to reach the optimal $O(n \log n)$.
In the buildup to the novel {\sc GraphSort} algorithm, we go through the {\sc TrivialGraphSort} algorithm which has been expounded by Pal \cite{rkp2}.
This assists in the building of the required theory which has a different approach to Pal's as we go in the direction of Hamiltonian paths imperative in the development of the improved algorithm.

\section{Definitions}
Let us define a few terms that will be used frequently throughout the course of this paper.
First let's define the family of graphs we specifically are working with.
\begin{definition}
    Let $G = (V, E)$ be a directed graph with vertex set $V$ and edge set $E$.
    Let $V = \{1, \ldots, n\}$ where $n$ is the order.
    Let the following set $A = \{a_1, \ldots, a_n\}$ represent the values of their corresponding vertex.
    Let the edge $(u, v) \in E$ for vertices $u$ and $v$ if and only if $a_u < a_v$.
    Then the graph $G$ is a comparison graph.
\end{definition}
\begin{remark}
    We often consider the corresponding set $A$ as the given array itself.
\end{remark}

Let's define the topological sort which is the basis of this entire paper.
\begin{definition}
    Let $G$ be a directed acylic graph with order $n$.
    Let $S = (s_1, \ldots, s_n)$ be a sequence of all vertices such that for all $1 \leq i \leq n$, the vertex $s_i$ is not adjacent to vertices $s_k$ such that $1 \leq k \leq i$.
    Then the sequence $S$ is a topological sort of graph $G$.
\end{definition}
\begin{remark}
    The topological sort of a graph is not necessarily unique.
\end{remark}

A term we will use to evaluate how close we are to achieving a directed acyclic graph with a unique topological sort is trueness.
\begin{definition}
    Let $G$ be a directed acyclic graph, and let $S$ represent a topological sort of $G$.
    The number of elements in $S$ that are not fixed, i.e. their immediate adjacent neighbors (previous and next element) are not unique, is the trueness of $G$, denoted by $\tau(G)$.
\end{definition}
\begin{remark}
    Note that the trueness is not a count of how many topological sorts (that's more of a combinatorial extension), rather we count the number of elements in the sequence that are not fixed.
\end{remark}
\begin{remark}
    A graph with only one topological sort has a trueness of one, i.e. $\tau(G) = 1$ as all vertices are fixed.
\end{remark}
\begin{remark}
    In general, we focus on improving the trueness, i.e. to decrease $\tau(G)$ rather than evaluate it.
\end{remark}

Since we will be working with arrays and array comparisons, we may want to compare an element of an array with some of its neighbors, so we have the following definition.
\begin{definition}
    Let $A$ be an array of size $n$, and let $1 \leq i \leq n$.
    Let $r$ be the comparison reach (or just reach) of $A$.
    Then we compare $A[i]$ only to $A[i - r], \ldots, A[i - 1]$ and $A[i + 1], \ldots, A[i + r]$.
    Note if an array index is out of bounds, we wrap it around the array, i.e. $A[-1]$ means $A[n - 1]$.
\end{definition}
\begin{remark}
    In other words, we wish to compare an element $A[i]$ to the closest $2r$ neighbors (equally distributed on both ``sides'').
\end{remark}

Now we define a comparison graph called the corresponding graph that is a graph constructed from an array and some chosen array comparisons.
\begin{definition}
    Let $G$ be a null directed graph, i.e. $V = \{1, \ldots, n\}$ and $E = \varnothing$.
    Let $A$ be an array, and $r$ the comparison reach of $A$.
    Then add $(i, j)$ for all vertices $i$ and $j$ such that $A[i] < A[j]$ and such that $j \in [i - r, i + r]$ but $j \neq i$.
    Then $G$ is a corresponding graph of $A$ of reach $r$.
\end{definition}
\begin{remark}
    A corresponding graph is a comparison graph, where our value set $A$ is the corresponding array.
    The corresponding graph is simply a specific structure of comparison graphs.
\end{remark}
\begin{remark}
    The initial ordering of the array can generate different corresponding graphs, and different reach values also may generate different corresponding graphs.
\end{remark}

Lastly, we define the directed Hamiltonian path which is well-known, but we will repeat the definition to aid with upcoming theorems.
\begin{definition}
    Let $G$ be a directed graph.
    Let $P$ be a path such that $P$ includes all the vertices of $G$.
    Then $P$ is called a directed Hamiltonian path in $G$.
\end{definition}

These definitions will come in play during the development of theory in this paper.
Moreover, we will frequently use basic graph theory definitions of trees, forests, connectedness, acyclicity, etc. which the reader is assumed to know.

\section{Preliminaries}
Before we start diving into the algorithms let's note some important preliminary remarks, assumptions, theorems, data structures, and algorithms that will be used in this paper.

Concerning the reach of an array, we will mainly study algorithms with a reach of one or for the sake of simplicity.
Constructing corresponding graphs with higher reach values definitely generates more complex graphs, but as we will see later, the algorithms concerning those values may be more efficient.
Also, we will study the total reach which is a reach value equal to the length of the array, and this will be used for the trivial algorithm.

Throughout the paper all the theorems stated apply to both comparison and corresponding graphs in most cases.
The only difference between the two is that the corresponding graph has a defined reach whereas the comparison graph has no such defined structure.

It's important to note that for the sake of proving correctness, we assume all arrays have distinct values (later we will discuss what happens when this is not the case).

Since we will be using the topological sort of graphs, let's first show that the graphs we are working with in fact do have topological sorts.
\begin{theorem}
    If $G$ is a comparison graph, then $G$ is acyclic.
    Hence $G$ is a directed acyclic graph.
\end{theorem}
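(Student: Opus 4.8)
The plan is to argue by contradiction: suppose $G$ contains a directed cycle and derive an impossible chain of inequalities among the values in $A$. First I would assume, for contradiction, that there exists a directed cycle in $G$, say on vertices $v_1 \to v_2 \to \cdots \to v_k \to v_1$ with $k \geq 2$, so that each consecutive pair (including $(v_k, v_1)$) is an edge of $G$.

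Next I would invoke the defining property of a comparison graph: an edge $(u,v)$ exists if and only if $a_u < a_v$. Applying this to every edge of the cycle yields $a_{v_1} < a_{v_2} < \cdots < a_{v_k} < a_{v_1}$. By transitivity of the strict order $<$ on the real numbers (here the values are distinct, as assumed in the preliminaries), this chain forces $a_{v_1} < a_{v_1}$, which is absurd. Hence no directed cycle can exist, so $G$ is acyclic.

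Finally I would note that a directed graph that is acyclic is, by definition, a directed acyclic graph, which gives the ``hence'' clause immediately; this also justifies (via the standard fact, usable later in the paper) that $G$ admits a topological sort. I do not expect any real obstacle here — the only subtlety worth a sentence is handling the shortest cycle or simply observing that the argument works for any closed walk, and flagging that the distinctness assumption on $A$ is what makes the inequalities strict and the contradiction clean.
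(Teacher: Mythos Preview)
Your proposal is correct and follows essentially the same approach as the paper: assume a directed cycle exists, apply the defining inequality $a_u < a_v$ along each arc to obtain $a_{v_1} < \cdots < a_{v_k} < a_{v_1}$, and derive a contradiction. The paper's proof is slightly terser but structurally identical; your added remarks about transitivity and the distinctness assumption are fine elaborations.
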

\begin{proof}
    Let $G$ be a comparison graph of order $n$.
    We must show that $G$ is acylic.
    To the contrary, assume $G$ contains a cycle, call it $C$.
    Then for $k \leq n$ as the length of $C$, let $C = (1, \ldots, k, 1)$ where $i$ (for $1 \leq i \leq k$) is a vertex in $G$ and every two adjacent vertices in $C$ is a directed arc in $C$.
    Let $a_i$ also represent the corresponding value for every vertex in $G$.
    Then by the definition of a comparison graph, we have
    \begin{align*}
        a_1 < \ldots a_k < a_1
    \end{align*}
    which is an obvious contradiction.
    Therefore, $G$ is acyclic, i.e. $G$ is a directed acylic graph.
\end{proof}
We can now easily derive the following corollaries.
\begin{corollary}
    If $G$ is a comparison graph, then $G$ has a topological sort.
\end{corollary}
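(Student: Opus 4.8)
The statement to prove is the corollary: if $G$ is a comparison graph, then $G$ has a topological sort.

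The plan is to derive this directly from the preceding theorem, which establishes that every comparison graph $G$ is a directed acyclic graph, combined with the standard fact that every finite DAG admits a topological sort. Concretely, I would invoke the well-known result that a directed acyclic graph always has a topological ordering of its vertices (this is a classical graph-theory fact, typically proved by repeatedly extracting a vertex of in-degree zero, or equivalently via a Depth-First Search postorder, which the preliminaries already flag as a tool the paper relies on). Since Theorem 1 guarantees $G$ is acyclic, this general fact applies immediately.

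The key steps, in order, are: first, cite Theorem 1 to conclude $G$ is a directed acyclic graph; second, recall (or briefly justify) that every finite DAG has at least one topological sort. For the second step, if a self-contained argument is desired rather than a bare citation, I would argue by induction on the order $n$: a DAG on $n \geq 1$ vertices must contain a source vertex $s$ (a vertex with no incoming arcs), since otherwise one could follow incoming arcs backward forever and, by finiteness, close a cycle; remove $s$, apply the inductive hypothesis to the remaining DAG on $n-1$ vertices to get a topological sort $S'$, and prepend $s$ to obtain a valid topological sort of $G$ that matches Definition 2 (each $s_i$ is non-adjacent to all earlier vertices $s_k$ with $k \le i$).

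There is essentially no obstacle here — the corollary is an immediate consequence of Theorem 1 together with a textbook fact. The only minor care needed is to make sure the existence-of-a-source argument is phrased so it meshes with the paper's particular definition of topological sort (Definition 2), in which the adjacency condition is stated in the ``no arc to earlier-or-equal index'' direction; placing sources first rather than last handles this cleanly. Thus I expect the proof to be a short two or three sentences.

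\begin{proof}
    Let $G$ be a comparison graph. By Theorem~1, $G$ is a directed acyclic graph. It is a standard fact that every finite directed acyclic graph has a topological sort: since $G$ is acyclic and finite, it contains a source vertex $s$ (a vertex with no incoming arcs), for otherwise following incoming arcs backward would, by finiteness, produce a cycle. Removing $s$ leaves a directed acyclic graph on $n - 1$ vertices, which by induction on the order has a topological sort $S'$; prepending $s$ to $S'$ yields a sequence of all vertices of $G$ in which each vertex is non-adjacent to every vertex occurring at or before it, i.e. a topological sort of $G$ in the sense of Definition~2. Hence $G$ has a topological sort.
\end{proof}
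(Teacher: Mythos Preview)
Your proposal is correct and follows the same approach as the paper: cite Theorem~1 to conclude $G$ is a DAG, then invoke the standard fact that every finite DAG admits a topological sort. The paper's own proof is actually briefer than yours---it simply writes ``by definition, $G$ has a topological sort'' for the second step---so your inclusion of the source-extraction induction is more detail than strictly needed, but it is certainly not wrong.
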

\begin{proof}
    Let $G$ be a comparison graph.
    By Theorem 1, $G$ is a directed acylic graph.
    Then by definition, $G$ has a topological sort.
\end{proof}
\begin{corollary}
    If $G$ is a corresponding graph of array $A$ with any valid reach $r$, then $G$ is acylic and has a topological sort.
\end{corollary}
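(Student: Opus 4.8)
The plan is to derive Corollary 2 almost immediately from the structural results that precede it. The key observation is that a corresponding graph is, by the remark following its definition, a special case of a comparison graph: it is constructed on the vertex set $V = \{1, \ldots, n\}$ with an arc $(i,j)$ exactly when $A[i] < A[j]$ (subject to the reach restriction $j \in [i-r, i+r]$), which is precisely the adjacency rule of a comparison graph with value set $A$. So the first step is simply to invoke that remark to assert that $G$ is a comparison graph.

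Once $G$ is known to be a comparison graph, the remaining two claims follow by citing the results already established. First I would apply Theorem 1 to conclude that $G$ is a directed acyclic graph, hence acyclic. Then I would apply Corollary 1 (or equivalently, reuse the definition of topological sort, which applies to any directed acyclic graph) to conclude that $G$ has a topological sort. The reach $r$ plays no role beyond guaranteeing that $G$ is well-defined; since any valid reach still yields a comparison graph, the conclusion holds uniformly for all valid $r$.

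There is essentially no obstacle here — the work was done in Theorem 1, and this corollary is a specialization. The only thing worth being careful about is making explicit the inclusion "every corresponding graph is a comparison graph," since the acyclicity argument in Theorem 1 relies on the comparison-graph adjacency rule to produce the chain $a_1 < \cdots < a_k < a_1$. I would state that inclusion as the single substantive line of the proof and let the two prior results close it out. A one- or two-sentence proof will suffice: note that $G$ is a comparison graph, apply Theorem 1 for acyclicity, and apply Corollary 1 for the existence of a topological sort.
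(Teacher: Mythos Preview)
Your proposal is correct and matches the paper's own proof essentially line for line: the paper also observes that every corresponding graph is a comparison graph with value set $A$, and then invokes Theorem 1 and Corollary 1 to conclude acyclicity and the existence of a topological sort. There is nothing to add.
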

\begin{proof}
    Since every corresponding graph is a comparison graph with the value set being the array $A$, the graph $G$ necessarily has a topological sort following immediately from Theorem 1 and Corollary 1.
\end{proof}

Although we can construct algorithms dealing simply with the abstract mathematical definitions of graphs, to implement these algorithms in practice, it's imperative to construct useful data structures to represent graphs.
We will use the adjacency list representation of graphs where every vertex has a list of vertices adjacent to it.
This data structure is very memory efficient and easily accessible.
In fact, we will be choosing certain orientations of this list to best cater to our needs in our procedures.

We will be employing Depth-First Search (DFS) all throughout this paper.
It is assumed the reader has firm knowledge of this procedure and understands its applications.
In summary, DFS means to travel deep from a root vertex till we can only travel to a vertex that has already been visited which is when we start back-tracking to parent vertices to continue on a different traversal.
Once we back-track to the root itself, we have discovered that part of the graph and we do the same process for some other unvisited root.
In this way we traverse the entire graph.
The run-time of DFS on some graph $G$ is $O(n + m)$ where $n$ and $m$ are order and size of $G$, respectively.
While running DFS we keep track of parent vertices and start and finish times, where the start time is the discrete time at which we visit a vertex and a finish time is the discrete time at which we finish visiting all its children.
The following theorem is the main reason DFS is relevant to our purpose.
\begin{theorem}
    Let $G$ be a directed acyclic graph.
    Run Depth-First Search (DFS) on $G$, and generate a stack $S$ where the top has the greatest finish time and the bottom has the lowest finish time (ordered by decreasing finish times).
    Then that stack $S$ is a topological sort of $G$ \cite{clrs}.
\end{theorem}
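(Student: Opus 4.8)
The plan is to reduce the theorem to a single structural fact about DFS finish times and then read off the topological-sort condition. Let $d(v)$ and $f(v)$ denote the start (discovery) and finish times produced by DFS. The crux is the claim: \emph{if $G$ is a directed acyclic graph and $(u,v)$ is an arc of $G$, then after running DFS on $G$ we have $f(u) > f(v)$.} Granting this claim, the theorem is immediate. The stack $S = (s_1, \dots, s_n)$ lists the vertices in strictly decreasing order of finish time (top has the greatest), so for any arc $(s_i, s_j) \in E$ the claim gives $f(s_i) > f(s_j)$, which forces $i < j$. Hence no $s_i$ has an arc to a vertex $s_k$ with $k \le i$, and that is exactly the defining property of a topological sort. (Existence is not an issue: acyclicity of $G$ is a hypothesis, so the object we produce really is a topological sort of $G$ rather than a near-miss.)

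To prove the claim I would do a case analysis on the color of $v$ at the instant the arc $(u,v)$ is first examined while exploring $u$, so that $u$ is gray and its interval $[d(u), f(u)]$ is still open. If $v$ is white, it is discovered inside the exploration of $u$ and becomes a descendant of $u$ in the DFS forest; by the parenthesis/nesting property of DFS intervals, $d(u) < d(v) < f(v) < f(u)$, so $f(u) > f(v)$. If $v$ is black, then $v$ has already finished while $u$ has not, so $f(v)$ is assigned strictly before $f(u)$, giving $f(u) > f(v)$. The remaining case, $v$ gray, is where acyclicity is used: a gray $v$ is an ancestor of $u$ in the current DFS tree, yielding a path $v \rightsquigarrow u$ which together with the arc $(u,v)$ produces a cycle in $G$, contradicting the hypothesis. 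So only the white and black cases occur, and in both $f(u) > f(v)$.

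The one step I would be most careful about — though it is classical rather than genuinely hard — is the white-vertex case, since it rests on the parenthesis theorem: for any two vertices the discovery/finish intervals are either disjoint or nested, with nesting corresponding to the ancestor–descendant relation in the DFS forest. This is exactly the flavor of DFS fact the paper already assumes the reader knows, so I would either cite it alongside \cite{clrs} or give a one-line reminder that each recursive call on a vertex is entered and exited as a properly nested pair. Everything else — the color-based case split and the translation of ``decreasing finish times'' into the condition $(s_i, s_k) \notin E$ for all $k \le i$ — is routine bookkeeping.
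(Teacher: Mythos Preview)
Your proof is correct and is in fact the standard argument from \cite{clrs}. However, the paper does not prove this theorem at all: immediately after stating it, the authors write ``We will take this theorem as is without proof \cite{clrs}.'' So there is nothing to compare against on the paper's side---you have supplied exactly the CLRS proof that the paper merely cites.
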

We will take this theorem as is without proof \cite{clrs}.
This theorem is the backbone of many of the algorithms we will study further in this paper.

Lastly, we will call the components of a directed graph as the components of the underlying graph.
Hence, there is no conclusion about weak or strong connectedness involved.

\section{Construction of the Corresponding Graph}
The basis of sorting using graphs is to first construct the graph.
The definition of the corresponding graph given before essentially lays out our method of construction.
Note that algorithms may vary immensely depending on the reach, so we fix the comparison reach for every algorithm.
However, to preserve generality, consider $r$ to be the reach of an array $A$ of $n$ elements.
We will use the adjacency list representation for our corresponding graph $G$ of reach $r$ on $A$.
By definition, we will be comparing $A[i]$ with $A[i + k]$ for all indices $1 \leq i \leq n$ and all $k$ in the range $1 \leq k \leq r$ (not if $i + k > n$, then we simple take $i + k$ modulo $n$).
In this way we will loop through every single index $i$ and make the necessary comparisons.
Note that although a reach $r$ denotes that we compare an element to the closest $r$ neighbors to the left and $r$ neighbors to the right, it is equivalent in practice to simply compare just the $r$ right neighbors or the $r$ left neighbors.

However, before we implement a process to construct the graph, it's important to discuss the process of adding an arc.
We use the adjacency list operation, and we can choose for the list to be sorted, reverse sorted, or completely arbitrary (similar to a set).
By sorted, we mean with respect to the given array values.
Of course it benefits us that it is sorted, since we will see that we normally want to traverse to the next smallest valued vertex.
We achieve this sorted invariant by inserting every new vertex in the correct position in our adjacency list so it remains sorted.

There are two strategies for this: a binary search inspired insert or an insertion sort inspired insert.
For the binary search inspired algorithm, the basic idea is that we check the middle of the list and determine which half the new vertex should be in, and in this way we continue till there is only one spot.
This has the condition that it requires a fast way to access adjacency list elements.
On the other hand, the insertion sort inspired algorithm where the basic idea is to go through the list from the first to last element and find where the new vertex should fit.
This algorithm is more versatile and compatible with the general list data structure, so use this.
Moreover, the asymptotic run-time for our algorithms is not affected by this choice.

\subsection{Pseudo-code}
Let's first see the process of adding an arc using the insertion sort inspired algorithm (note this is completely independent of our construction process, it simply acts as a method of our graph data structure).
Let $G$ be our graph, $A$ the array that has the values of the corresponding graph $G$, and we want to add the arc $(u, v)$.
Our adjacency list is represented by an array of doubly-linked lists $\mathrm{adj[]}$ with general double-linked list and array functionality.
The following is the pseudo-code for this process.
\begin{algorithmic}
    \small
    \Function {AddArc}{$G, A, u, v$}
        \State $w = \mathrm{adj}[u].\mathrm{front}$
        \While {$\mathrm{adj}[u].\mathrm{index} > 0$}
            \If {$A[v] < A[w]$}
                \State \Call{Insert}{$\mathrm{adj}[u], w.\mathrm{index}, v$}
                \State $G.\mathrm{size}++$
                \State \Return
            \EndIf
            \If {$\mathrm{adj}[u].\mathrm{index} = \mathrm{adj}[u].\mathrm{length}$}
                \State \Call{Append}{$\mathrm{adj}[u], v$}
                \State $G.\mathrm{size}++$
            \EndIf
            \State $w = w.\mathrm{next}$
        \EndWhile
    \EndFunction
\end{algorithmic}
This particular function will keep the adjacency list sorted in ascending order of values.
This configuration can vary depending on what is required for a parent process.

To construct the graph, we will add an arc depending on the value of the two elements being compared.
Let $A$ be an array of $n$ elements, and $r$ the comparison reach for the corresponding graph $G$ of $A$.
The following is the pseudo-code of the algorithm described for the construction.
\begin{algorithmic}
    \small
    \Function {ConstructGraph}{$A, n, r$}
        \State $G = $ null graph of order $n$
        \For {$i = 1$ to $n$}
            \For {$k = 1$ to $r$}
                \State $j = i + k \mod n$
                \If {$A[i] < A[j]$}
                    \State \Call {AddArc}{$G, A, i, j$}
                \Else
                    \State \Call {AddArc}{$G, A, j, i$}
                \EndIf
            \EndFor
        \EndFor
        \State \Return $G$
    \EndFunction
\end{algorithmic}
Note that we assume $A$ contains distinctly valued elements.
Further this process is quite simple as it always does the required number of comparisons.

\subsection{Run-time}
For adding an edge, the worst case is when we must go through the entire adjacency list and then choose to add the vertex to the end, since we need to compare the new vertex with all the other elements.
Let $a$ be the maximum length of the adjacency list that we wish to add to, then our worst-case run-time for adding an edge is $\Theta(a)$.
Note that normally the worst run-time is $\Theta(n)$ where $n$ is the order of graph, but for our sake we will keep the run-time in terms of $a$.

The construction of the graph is also a very simple process.
By definition, every vertex can only be adjacent to a maximum of $2r$ vertices since that is our comparison reach.
Thus, we let $a = 2r$, i.e. maximum length of any adjacency list.
Since we use the compare-right-only method, for every element we add $r$ arcs.
Lastly, we have $n$ elements, and for each one we follow this method.
In summary, we have a naive run-time of $O(n \cdot r \cdot 2r)$ which equates to $O(nr^2)$ which is quadratic in terms of $r$ and linear in terms of $n$.

\section{Properties of the Topological Sort}
Before we discuss the core algorithms, let's first explore the properties and applicability of the topological sort for our sorting problem.
Firstly, let's show that the existence of the topological sort of the corresponding graph being an actual sort of the array.
\begin{theorem}
    Let $G$ be a corresponding graph of the array $A$.
    Then there exists a topological sort of $G$ such that it is the sequence of indices of the sorted array of $A$.
\end{theorem}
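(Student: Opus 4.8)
The plan is to exhibit a specific topological sort of $G$ and show it coincides with the permutation of indices that sorts $A$. Let $\pi = (\pi_1, \ldots, \pi_n)$ be the unique sequence of indices such that $A[\pi_1] < A[\pi_2] < \cdots < A[\pi_n]$; this is well-defined because the array values are assumed distinct. I would claim that $S = (\pi_1, \ldots, \pi_n)$ is a topological sort of $G$. To verify this against the definition of topological sort given in the excerpt, I must check that for each $i$, the vertex $s_i = \pi_i$ is not adjacent to any vertex $s_k = \pi_k$ with $k \leq i$.

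First I would unpack what ``$\pi_i$ not adjacent to $\pi_k$ for $k \le i$'' means in a comparison graph. An arc $(u,v) \in E$ exists exactly when $A[u] < A[v]$, so $\pi_i$ being the origin of an arc into some $\pi_k$ would require $A[\pi_i] < A[\pi_k]$. But since $k \le i$ and the values along $\pi$ are strictly increasing, $A[\pi_k] \le A[\pi_i]$, so no such arc $(\pi_i, \pi_k)$ can exist. Hence $\pi_i$ is not adjacent (as a tail) to any earlier-or-equal vertex in the sequence, which is precisely the condition required. One should also address the $k = i$ case, which is trivial since no vertex is adjacent to itself (the definition of the corresponding graph excludes $j = i$, and for a general comparison graph $A[i] < A[i]$ is false). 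This establishes that $S = \pi$ is a valid topological sort of $G$, and by construction $S$ is exactly the sequence of indices of the sorted array of $A$, which is what the theorem asserts.

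I expect the main obstacle to be mostly a matter of carefully matching the paper's particular phrasing of ``topological sort'' — their definition says $s_i$ is not adjacent to $s_k$ for $1 \le k \le i$, which is the ``no back-edges'' condition but stated in terms of $s_i$ being the tail; I need to make sure the direction of adjacency in Definition 2 is read consistently with the edge convention in Definition 1 (arc from smaller value to larger value), and handle the wrap-around nature of the corresponding graph — though since edges only ever point from smaller to larger values regardless of index wrap-around, the argument is unaffected. A secondary point worth a sentence: this theorem only guarantees \emph{existence} of such a topological sort, not uniqueness, so I would not need to rule out other topological sorts here; that concern is what motivates the later notion of trueness. The whole argument is short and essentially reduces to the observation that the value ordering is itself a linear extension of the partial order induced by the comparison edges.
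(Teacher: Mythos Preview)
Your proof is correct and follows essentially the same line as the paper's: both exhibit the sorted-index permutation as the candidate sequence and verify that no arc can point backward in it, since an arc $(u,v)$ forces $A[u] < A[v]$. The only superficial difference is that the paper phrases the verification as a contradiction (assume a back-edge and derive $i < j$ against $i > j$), whereas you argue directly; your treatment is in fact slightly more careful in handling the $k=i$ case and noting that the wrap-around convention is irrelevant.
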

\begin{proof} 
    Firstly, there exists at least one topological sort in $G$ by Corollary 2.
    Let $A'$ be the sorted array of $A$.
    Let $S$ be the sequence of indices in $A$ such that the $A$-values for the these indices in $A'$ are sorted.
    Then our claim is $S$ is a topological sort of $G$.
    Assume to the contrary, that $S$ is not a topological sort, i.e. it contains a back-edge $(i, j)$ such that $i > j$ where $i, j \in S$, i.e. indices of $A$.
    By definition of a corresponding graph, if there exists an arc $(i, j)$, then $A[i] < A[j]$, then since $A'$ is sorted, we conclude $i < j$ which contradicts our hypothesis.
    Hence, there exists such a topological sort of $G$.
\end{proof}

We have the following immediate corollary that will be important in some time.
\begin{corollary}
    If there exists precisely one topological sort of the corresponding graph $G$ of the array $A$, then that topological sort is the sequence of indices of the sorted array of $A$.
\end{corollary}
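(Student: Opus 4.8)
The statement to prove is Corollary 4: if the corresponding graph $G$ of array $A$ has precisely one topological sort, then that unique topological sort is the sequence of indices of the sorted array of $A$.

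The plan is to leverage Theorem 3 directly, since almost all the work has already been done there. First I would invoke Theorem 3, which guarantees that there exists a topological sort $S^{\ast}$ of $G$ that equals the sequence of indices of the sorted array of $A$. Simultaneously, the hypothesis of the corollary gives us that $G$ has \emph{precisely one} topological sort; call it $S$. Since $S^{\ast}$ is a topological sort of $G$, and $S$ is the unique topological sort of $G$, we must have $S = S^{\ast}$. Therefore $S$ is the sequence of indices of the sorted array of $A$, which is exactly the claim.

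The key steps, in order: (1) state that by uniqueness the topological sort is well-defined — denote it $S$; (2) apply Theorem 3 to produce a topological sort $S^{\ast}$ with the desired ``sorted-indices'' property; (3) conclude $S = S^{\ast}$ by uniqueness; (4) transfer the property from $S^{\ast}$ to $S$. This is essentially a one-line deduction once Theorem 3 is in hand.

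There is no real obstacle here, since Theorem 3 does the heavy lifting. The only thing to be mildly careful about is the logical structure: Theorem 3 is an existence statement (``there exists a topological sort such that \ldots''), and we are combining it with a uniqueness hypothesis, so the argument is the standard ``existence $+$ uniqueness $\Rightarrow$ that object has the property.'' One should make sure the phrase ``precisely one topological sort'' is read as ``exactly one,'' not ``at least one,'' so that the identification $S = S^{\ast}$ is valid. I would also note in passing that Corollary 2 already guarantees at least one topological sort exists, so the hypothesis ``precisely one'' is consistent and simply pins it down uniquely.
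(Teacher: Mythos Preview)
Your proposal is correct and follows essentially the same approach as the paper: invoke Theorem~3 to obtain a topological sort with the sorted-indices property, then use the uniqueness hypothesis to identify it with the one topological sort of $G$. The paper's proof is just a terser version of exactly this argument.
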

\begin{proof}
    From Theorem 3, we know there must always exist a topological sort of $G$ that is the sequence of indices from the sorted array of $A$.
    Since there is only one topological sort, it must be such a sequence.
\end{proof}

Now let's work toward finding a case where there is only one topological sort.
\begin{lemma}
    There exists at most one Hamiltonian path, i.e. a path that visits every vertex, in any comparison graph (includes corresponding graphs).
\end{lemma}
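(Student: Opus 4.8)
The plan is to exploit the defining property of comparison graphs: every arc points from a vertex of smaller value to one of larger value. The key observation is that \emph{any} directed path in a comparison graph must traverse its vertices in strictly increasing order of their corresponding values, and since the values are distinct there is exactly one way to list all $n$ vertices in increasing order of value. A Hamiltonian path is such a listing, so it is forced to be this unique one.

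First I would take an arbitrary directed Hamiltonian path $P = (v_1, v_2, \ldots, v_n)$ in a comparison graph $G$ with value set $A$. Since each consecutive pair $(v_i, v_{i+1})$ is an arc of $G$, the definition of a comparison graph gives $a_{v_i} < a_{v_{i+1}}$ for every $1 \le i \le n-1$. Chaining these inequalities yields $a_{v_1} < a_{v_2} < \cdots < a_{v_n}$, so the sequence $(v_1, \ldots, v_n)$ is precisely the list of all vertices of $G$ sorted by their $A$-values.

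Next I would close the argument with uniqueness: because $A$ consists of distinct values, the order in which the $n$ vertices appear when sorted by value is uniquely determined. Hence if $P$ and $Q$ are both Hamiltonian paths in $G$, each of them equals this one sorted sequence, so $P = Q$. This gives \emph{at most} one Hamiltonian path (there may be none, e.g.\ when the reach of a corresponding graph is too small to supply the needed arcs). Since every corresponding graph is a comparison graph with the array as its value set, the claim for corresponding graphs follows at once.

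There is essentially no serious obstacle here — the proof is a direct unwinding of the definitions, and indeed it is closely parallel to the acyclicity argument of Theorem 1. The one point that deserves explicit mention is that ``path'' must be read as a \emph{directed} path, so that each step respects arc orientation; under an undirected reading the statement would be false, so I would flag this. A secondary subtlety worth a sentence is the distinctness hypothesis on $A$ granted in the Preliminaries, which is exactly what upgrades ``sorted ordering'' from ``unique up to ties'' to genuinely unique, and hence what pins down the Hamiltonian path.
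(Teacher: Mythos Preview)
Your proof is correct, but it takes a cleaner and more direct route than the paper's own argument. The paper proceeds by contradiction: it supposes two distinct Hamiltonian paths $H$ and $H'$ exist, picks an arc $uw$ in $E(H)\setminus E(H')$ and a companion arc $vw$ in $E(H')\setminus E(H)$ sharing the terminus $w$, and then argues from the inequalities $A[u]<A[w]$, $A[v]<A[w]$ that the visiting order along $H$ becomes inconsistent. Your approach instead observes the global fact that \emph{any} directed Hamiltonian path lists the vertices in strictly increasing $A$-value order, and since the values are distinct this ordering is unique; hence two Hamiltonian paths must coincide. This buys you a shorter argument with no case analysis and no need to locate a ``first differing edge,'' and it makes transparent exactly which ingredient (distinctness of values) drives uniqueness. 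The paper's approach, by contrast, stays closer to a local edge-by-edge comparison, which is less economical here but perhaps foreshadows the edge-level reasoning used later in the merge proofs. Your remarks on the directedness of paths and on the distinctness hypothesis are apt and worth keeping.
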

\begin{proof}
    By way of contradiction, assume there exist two distinct Hamiltonian paths in comparison graph $G$ called $H$ and $H'$.
    Then there exists an arc $e = uw \in E(H) - E(H')$ for vertices $u$ and $v$.
    Then there must exist an edge $e' = vw \in E(H') - E(H)$ for vertices $v$ and $w$ since every Hamiltonian path must have an arc for every vertex.
    Thus $A[u] < A[w]$ and $A[v] < A[w]$.
    Without loss of generality, assume $A[u] < A[v]$.
    Note the array $A$ represents the $A$-values of $G$.
    Then in $H$, we cannot visit $v$ after visiting $w$ since $A[v] < A[w]$ and all vertices after $w$ in $H$ also have greater value than $v$.
    Thus we must visit $v$ before we traverse the edge $e$ which follows that we cannot visit $u$ again with the same reasoning.
    This contradicts the hypothesis that we visit $v$ before $u$.
    Hence there can exist at most one Hamiltonian path in $G$.
\end{proof}

Now we have enough material to show a relationship between a Hamiltonian path and the topological sort of some corresponding graph.
\begin{theorem}
    If there exists a Hamiltonian path in the comparison graph $G$ (includes corresponding graphs), then there exists precisely one topological sort of $G$.
\end{theorem}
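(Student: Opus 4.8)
The plan is to produce the unique topological sort explicitly: it will be the Hamiltonian path itself, read in path order. First I would name the Hamiltonian path $H = (v_1, v_2, \ldots, v_n)$, so that each consecutive pair $(v_i, v_{i+1})$ is an arc of $G$. By the defining property of a comparison graph this yields the chain $A[v_1] < A[v_2] < \cdots < A[v_n]$; that is, $H$ lists the vertices in strictly increasing order of their $A$-values.

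Next I would verify that $H$ really is a topological sort of $G$. If it were not, the definition of topological sort would be violated by some arc $(v_i, v_j)$ with $j \le i$. But an arc forces $A[v_i] < A[v_j]$, whereas $j \le i$ together with the chain above forces $A[v_j] \le A[v_i]$ — a contradiction. (Existence of \emph{some} topological sort is already guaranteed by Corollary 2; the content of this step is that $H$ is one.)

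For uniqueness I would take an arbitrary topological sort $S = (s_1, \ldots, s_n)$ and show $S = H$. The key observation, read straight off the definition of topological sort (no vertex $s_i$ has an arc to an earlier $s_k$), is that the two endpoints of any arc appear in order: if $(u, w) \in E$ then $u$ precedes $w$ in $S$. Applying this to the $n-1$ arcs $(v_1, v_2), (v_2, v_3), \ldots, (v_{n-1}, v_n)$ of $H$ shows $v_1$ precedes $v_2$ precedes $\cdots$ precedes $v_n$ in $S$. Since these are all $n$ vertices of $G$, this single chain of precedences fixes the position of every vertex, forcing $S = (v_1, \ldots, v_n) = H$. Hence $G$ has precisely one topological sort.

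I do not expect a serious obstacle here; the argument is short and essentially a transitivity chase. The only delicate point is drawing the ``an arc is oriented forward in every topological sort'' step cleanly from the paper's terse definition of topological sort, after which the Hamiltonian path supplies exactly the total chain of relations needed. It is worth remarking in passing that combining this theorem with Corollary 3 immediately shows the unique topological sort obtained is the sequence of indices of the sorted array of $A$, which is the payoff used later.
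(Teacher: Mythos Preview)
Your proof is correct. For existence, both you and the paper show that the Hamiltonian path $H$ itself is a topological sort by deriving a contradiction from a hypothetical backward arc $(v_i,v_j)$ with $j\le i$; the paper closes the contradiction by observing that this arc together with the $v_j$--$v_i$ segment of $H$ forms a cycle (contradicting Corollary~2), while you invoke the $A$-value chain $A[v_1]<\cdots<A[v_n]$ directly---structurally the same step. The uniqueness halves genuinely differ. The paper takes a second topological sort $S'$, locates the first position $k+1$ where $S'$ disagrees with $H$, and argues via $A$-values that $v_{k+1}$ can then never appear in $S'$, contradicting that $S'$ lists all vertices. Your argument is the standard DAG argument: every arc is oriented forward in any topological sort, and the $n-1$ Hamiltonian arcs chain together into a total precedence order that pins down every slot. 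Your route is shorter and uses nothing about comparison graphs (it works for any DAG with a Hamiltonian path); the paper's first-differing-position device leans on the array values and is closer in spirit to its other arguments, but is fussier to make airtight.
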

\begin{proof}
    Assume there exists a Hamiltonian path $H = (v_1, \ldots, v_n)$ in the comparison graph $G$ with corresponding values from array $A$.
    Let $S = (v_1, \ldots, v_n)$, i.e. the vertices in $H$ in the same order.
    Our claim is that $S$ is a topological sort.
    By way of contradiction, assume not, i.e. there exists an arc $e = (v_i, v_j)$ where $i > j$.
    In $H$ there is a path from $v_j$ to $v_i$ since $j < i$ by definition of a Hamiltonian path and a corresponding graph.
    Thus the arc $e$ produces a cycle which contradicts Corollary 2.
    Hence, there exists a topological sort $S$ of $G$ if the graph contains a Hamiltonian path.

    Now to show uniqueness, by way of contradiction, assume there exists another topological sort
    \begin{align*}
        S' = (v_1, \ldots, v_k, u_{k + 1}, \ldots, u_n)
    \end{align*}
    distinct from $S$ where $u_{k + 1}$ is the first element that is different between the two sequences.
    So the value of $u_{k + 1}$ is greater than $v_{k + 1}$ because otherwise $S$ would contain a back-edge.
    Then, $u_i \ne v_{k + 1}$ for all $k + 1 \leq i \leq n$, so $v_{k + 1} \not\in S'$ which contradicts our assumption of $S'$ a topological sort since it doesn't contain a vertex.
    Therefore, there exists a unique, i.e. precisely one, topological sort of $G$ as claimed.
\end{proof}

Thus we can show the big theorem for this section whose conditions we wish to satisfy through our algorithms.
\begin{theorem}
    If there exists a Hamiltonian path in the comparison graph $G$ (includes corresponding graphs) with corresponding values from array $A$, then the topological sort of $G$ is the sequence of indices of the sorted array of $A$.
\end{theorem}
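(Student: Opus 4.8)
The plan is to obtain this statement by combining the immediately preceding theorem with the argument used to prove Theorem 3. First I would apply Theorem 4: since $G$ contains a Hamiltonian path, $G$ has exactly one topological sort; call it $S$. All that then remains is to identify $S$ explicitly.

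Next I would exhibit a topological sort of $G$ that is visibly the sorted-index sequence. Let $A'$ be the sorted array of $A$, which is unique because the entries of $A$ are distinct, and let $T$ be the sequence of indices of $A$ arranged so that the corresponding $A$-values occur in increasing order. The proof of Theorem 3 shows that $T$ is a topological sort: if $T$ contained a back-edge $(i,j)$ with $i$ occurring after $j$ in $T$, then the defining property of a comparison graph gives $A[i] < A[j]$, which forces $i$ to occur before $j$ in $T$ — a contradiction. The only property of $G$ used here is that an arc $(i,j)$ implies $A[i] < A[j]$, which is exactly the definition of a comparison graph; hence the conclusion of Theorem 3, stated there for corresponding graphs, holds verbatim for every comparison graph.

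Finally I would conclude: $G$ has exactly one topological sort $S$ by Theorem 4, and $T$ is a topological sort of $G$ by the previous step, so $S = T$. Thus the topological sort of $G$ is the sequence of indices of the sorted array of $A$, as claimed. Equivalently, the last two sentences may be phrased as an application of Corollary 3 once its statement is extended from corresponding graphs to comparison graphs, which the observation above justifies.

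The only real subtlety — and hence the step I would be most careful about — is the class of graphs involved: Theorem 3 and Corollary 3 are phrased for corresponding graphs, whereas this theorem concerns arbitrary comparison graphs, so I would make explicit that the reach structure plays no role in those proofs and that only the comparison-arc property is used. Beyond that remark, the argument is a short assembly of Theorems 3 and 4, with no genuine obstacle.
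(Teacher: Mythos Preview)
Your proposal is correct and follows essentially the same route as the paper: apply Theorem~4 to obtain uniqueness, then invoke Corollary~3 (equivalently, the argument of Theorem~3) to identify the unique topological sort as the sorted-index sequence. If anything, your version is more careful than the paper's one-line proof, since you explicitly address the fact that Theorem~3 and Corollary~3 are stated only for corresponding graphs while Theorem~5 concerns arbitrary comparison graphs; the paper simply cites Corollary~3 without comment on this point.
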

\begin{proof} 
    By Theorem 4, there is precisely one topological sort of $G$, and by Corollary 3 that topological sort is the desired sequence of indices as claimed.
\end{proof}

A Hamiltonian path in the comparison or corresponding graph implies that we have an arc from a vertex to the smallest greater value vertex, i.e. we have an arc between adjacent elements in the sorted array.
Hence, we must either get to the point where we have a Hamiltonian path in the corresponding graph which makes it very easy to find the sorted array, or we develop a method to find all the topological sorts and find which one is the sorted array.

This paper will delve into the former strategy as we wish to use DFS.
The DFS algorithm is integral to the first strategy because it gives an efficient method to finding the topological sort, and since there should be only one, we are done after running DFS.

\subsection{Run-time of DFS}
When we introduced DFS we mentioned its asymptotic run-time is $O(n + m)$ where $n$ and $m$ are the order and size of the graph $G$ on which we run DFS on.
Let $G$ be a corresponding graph of array $A$ with $n$ elements and a reach of $r$.
Then the order for $G$ is also $n$.
We know every vertex will be adjacent to or from precisely $2r$ vertices by definition of the reach.
Thus by the degree-sum formula:
\begin{align*}
    &\quad \sum_{v \in V(G)} \mathrm{deg}(v) = n(2r) = 2m \\
    &\therefore m = nr.
\end{align*}
Thus the run-time of DFS on $G$, a corresponding graph, is $O(n + nr)$ which equates to $O(nr)$ since $r$ is greater than one.

However it is important to note that if any comparison graph contains a Hamiltonian path and we start DFS at the minimum value vertex, we only have a run-time of $\Theta(n)$ for DFS which is the time it takes to traverse the Hamiltonian path after which we have visited every vertex as required.

\section{A Trivial Algorithm}
From Theorem 5 we know that if we can construct a corresponding graph such that it contains a Hamiltonian path, it is a trivial process to find the sorted array.
The simplest method to achieve this is to construct the most complete corresponding graph, i.e. we compare every element of an array $A$ with every other element of $A$.
In this way, our graph necessarily must contain a Hamiltonian path; then we trivially we can determine the sorted array.
So let $G$ be a corresponding graph of array $A$ of $n$ elements with a reach of $r = \lfloor n / 2 \rfloor$, i.e. we compare every element of $A$ with the closest $2r \geq n - 1$ neighbors (there are $n - 1$ elements other than the current).
This graph $G$ is the most complete corresponding graph of $A$ since we cannot add more arcs.
Now we can run DFS on $G$ starting at the minimum value element of $A$.
Then we claim that the DFS stack is the sorted array of $A$.

\subsection{Pseudo-code}
Let $A$ be an array of $n$ elements.
Let \textsc{FindMin($A$)} return the index of the minimum value element of $A$ that runs in linear time.
Secondly let \textsc{DFS($A, S$)} be the function that runs the DFS algorithm where $S$ is the order by which DFS does its uppermost visits, and after the process is done $S$ is the DFS stack.
For us we will have $S = (x)$ where $x$ is the index of the minimum value element.
Lastly, the procedure \textsc{ToArray($A, S$)} constructs an array from a list $S$ of indices of $A$.
Then we have the following pseudo-code for the trivial algorithm.
\begin{algorithmic}
    \Function {TrivialGraphSort}{$A$}
        \State $n = A.\mathrm{length}$
        \State $G = $ \Call {ConstructGraph}{$A, n, \lfloor n / 2 \rfloor$}
        \State $x = $ \Call {FindMin}{$A$}
        \State $S = (x)$
        \State \Call {DFS}{$G, S$}
        \State \Return \Call{ToArray}{$A, S$}
    \EndFunction
\end{algorithmic}
Note that $S$ is a stack, so that must be converted to an array technically.
The trivial algorithm is not very complex as it simply fulfills the requirements of Theorem 5 to achieve the sort with the most naive strategy.

\subsection{Correctness}
Before we delve into proving the correctness of the algorithm, let's define a term we have casually used.
\begin{definition}
    Let $G$ be a corresponding graph.
    If we can't add more arcs to $G$, it is a complete corresponding graph, i.e. it is maximal with respect to arcs.
\end{definition}

Now let's prove an important statement related to complete corresponding graphs.
\begin{theorem}
    Every complete corresponding graph $G$ of array $A$ of length $n$ contains a Hamiltonian path.
\end{theorem}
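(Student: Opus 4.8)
The plan is to reduce the statement to the observation that a complete corresponding graph is, up to relabelling, a transitive tournament. First I would make precise what ``complete'' buys us here: the corresponding graph in question (the one built by \textsc{TrivialGraphSort}) has reach $r = \lfloor n/2 \rfloor$, so every index $i$ is compared against its $2r$ nearest neighbours, and after the wrap-around identification these $2r$ positions exhaust all $n-1$ vertices other than $i$. Consequently, for every pair of distinct vertices $i$ and $j$, exactly one of the arcs $(i,j)$, $(j,i)$ lies in $E(G)$ --- namely $(i,j)$ if $A[i] < A[j]$ and $(j,i)$ otherwise --- where distinctness of the array values is what guarantees ``exactly one''.

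Second, I would exhibit the Hamiltonian path explicitly. Relabel the vertices as $w_1, \dots, w_n$ so that $A[w_1] < A[w_2] < \cdots < A[w_n]$; this is possible because the values are distinct. Consider the walk $P = (w_1, w_2, \dots, w_n)$, which visits every vertex of $G$. For each $i$ the vertices $w_i$ and $w_{i+1}$ are adjacent by the first step, and since $A[w_i] < A[w_{i+1}]$ the arc is oriented $(w_i, w_{i+1}) \in E(G)$; hence $P$ is a directed Hamiltonian path in $G$, as required. Equivalently, one could argue by induction on $n$: delete the maximum-value vertex $w_n$, note the remaining graph is again a complete corresponding graph, take a Hamiltonian path on it by the inductive hypothesis (which is forced to end at $w_{n-1}$), and append $w_n$, using that $(w_{n-1}, w_n) \in E(G)$.

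The only genuinely delicate point is the first step: one must be careful with the modular index arithmetic, since when $n$ is even and $r = n/2$ the positions $i+r$ and $i-r$ coincide modulo $n$, and one should still check that the neighbour positions cover every vertex other than $i$. Everything after that is immediate --- acyclicity (Theorem 1) together with total comparison makes $G$ a transitive tournament, whose sorted-by-value ordering is forced to be a Hamiltonian path. I would also make explicit in the write-up that this argument relies on the maximal reach: for a small fixed reach, a corresponding graph admitting no further arcs need not contain a Hamiltonian path (e.g.\ $A = [2,1,3]$ with $r=1$), so the word ``complete'' must be read in the total-reach sense of the surrounding discussion.
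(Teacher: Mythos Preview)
Your argument is correct and is essentially the paper's own proof: list the vertices in increasing order of their $A$-values and observe that each consecutive pair is joined by an arc because the graph is complete. The paper separates your first step into its own statement (Proposition~1 handles the reach-$\lfloor n/2\rfloor$ arithmetic, while Theorem~6 simply assumes completeness via Definition~7), but the Hamiltonian-path construction is identical. One small slip in your closing parenthetical: for $A=[2,1,3]$ with $r=1$ and $n=3$, the wrap-around already compares every pair, and the graph \emph{does} have the Hamiltonian path $2\to 1\to 3$; you need $n\ge 4$ (for instance $A=[1,3,2,4]$ with $r=1$) to exhibit a reach-one corresponding graph with no Hamiltonian path.
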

\begin{proof} 
    Let $v_1$ be the index of the minimum value element of $A$.
    Then let $v_{i}$ be the smallest greater valued index than $v_{i - 1}$ for all $1 < i \leq n$.
    Since $G$ is a complete corresponding graph and since for every $i$ we have $A[v_{i - 1}] < A[v_i]$, the edge $v_{i - 1}v_i$ is in $G$.
    Thus the path $P = (v_1, \ldots, v_n)$ is in $G$ and spans all vertices of $G$.
    Hence $P$ is a Hamiltonian path in $G$.
\end{proof}

Let's show something about our particular case in the trivial algorithm.
\begin{proposition}
    If $G$ is a corresponding graph of an array $A$ with $n$ elements and reach $\lfloor n / 2 \rfloor$, then $G$ is a complete corresponding graph.
\end{proposition}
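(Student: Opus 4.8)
The plan is to show that with reach $r = \lfloor n/2 \rfloor$ the construction \textsc{ConstructGraph} compares every pair of distinct indices, so that every arc a comparison graph could possibly carry is already present; maximality with respect to arcs then follows at once. First I would unwind the definitions: $G$ fails to be a complete corresponding graph precisely when there is a pair of distinct indices $i, j$ with $A[i] < A[j]$ but $(i,j) \notin E(G)$. By the construction rule for corresponding graphs, the only way this can happen is if $j \notin [i - r, i + r]$ (indices taken modulo $n$ under the wraparound convention). Hence it suffices to prove that for this particular $r$ every index $j \neq i$ lies in the window $[i - r, i + r] \pmod n$.

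Next I would reduce this to a statement about circular distance. Arranging the indices $1, \ldots, n$ around a cycle and fixing $i \neq j$, set $d = (j - i) \bmod n$, so $d \in \{1, \ldots, n-1\}$. The forward window $[i, i+r]$ contains $j$ exactly when $d \leq r$, and the backward window $[i-r, i]$ contains $j$ exactly when $n - d \leq r$. So it is enough to establish the familiar bound $\min(d, n - d) \leq \lfloor n/2 \rfloor$ for every $d \in \{1, \ldots, n-1\}$: any two vertices of an $n$-cycle are within distance $\lfloor n/2 \rfloor$ of each other.

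To verify that bound I would simply split on $d$. If $d \leq \lfloor n/2 \rfloor$ we are done. If $d > \lfloor n/2 \rfloor$, then $n - d < n - \lfloor n/2 \rfloor = \lceil n/2 \rceil \leq \lfloor n/2 \rfloor + 1$, and since $n - d$ is an integer this forces $n - d \leq \lfloor n/2 \rfloor$; this handles both parities of $n$ uniformly. Feeding this back, every pair $i \neq j$ is indeed compared during \textsc{ConstructGraph}, so whenever $A[i] < A[j]$ the arc $(i,j)$ is added — leaving no arc that could be appended, which is exactly the statement that $G$ is maximal, i.e. complete.

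I do not expect a genuine obstacle here; the one place the write-up needs care is the modular-distance argument itself, together with the wraparound convention and the behaviour of $\lfloor n/2 \rfloor$ when $n$ is even (where the window $[i-r, i+r]$ nominally lists $n+1$ entries but still covers exactly the $n$ vertices, with $i$ appearing at both endpoints). Making that precise, rather than leaning on the informal ``closest $2r \geq n-1$ neighbours'' phrasing used earlier, is the only subtlety.
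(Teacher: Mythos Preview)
Your proposal is correct and follows essentially the same route as the paper's proof: both argue that any two distinct indices lie within circular distance $\lfloor n/2 \rfloor$, so every possible comparison arc is already present and none can be added. The paper compresses this into a single sentence (``there are at most $\lfloor n/2 \rfloor$ elements in between $i$ and $j$ in $A$ since there are $n-1$ elements besides $i$ or $j$''), whereas you make the circular-distance bound $\min(d, n-d) \leq \lfloor n/2 \rfloor$ explicit and handle the parity of $n$ carefully; the underlying argument is the same.
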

\begin{proof} 
    By way of contradiction assume that there exists an arc $e = (i, j)$ that we can add to $G$ where $i$ and $j$ are indices.
    Then there are at most $\lfloor n / 2 \rfloor$ elements in between $i$ and $j$ in $A$ since there are $n - 1$ elements besides $i$ or $j$.
    Thus $e$ is in $G$ which is a contradiction.
    Hence $G$ is a complete corresponding graph.
\end{proof}

Although the design of this algorithm is simple to follow, let's formally prove the correctness of the trivial algorithm with the following theorem.
\begin{theorem}
    Let $A' = \textsc{TrivialGraphSort}(A)$.
    Then $A'$ is the sorted array of $A$.
\end{theorem}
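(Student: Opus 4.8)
The plan is to chain together the results already established in the excerpt to show that the output array \textsc{TrivialGraphSort}($A$) agrees with the sorted array of $A$. First I would trace through the pseudo-code: the algorithm constructs $G = \textsc{ConstructGraph}(A, n, \lfloor n/2 \rfloor)$, finds the index $x$ of the minimum-value element, initializes the stack $S = (x)$, runs \textsc{DFS}($G, S$), and returns \textsc{ToArray}($A, S$). So it suffices to show two things: (1) the graph $G$ is a corresponding graph of $A$ with reach $\lfloor n/2 \rfloor$ (immediate from the definition of \textsc{ConstructGraph} and the assumption that $A$ has distinct values), and (2) the DFS stack $S$ produced, when read off as a sequence of indices, is exactly the sequence of indices of the sorted array of $A$.

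For step (2), I would argue as follows. By Proposition 1, $G$ is a complete corresponding graph, and hence by Theorem 6 (equivalently Theorem 7) it contains a Hamiltonian path. By Theorem 5, the unique topological sort of $G$ is then the sequence of indices of the sorted array of $A$; call this sequence $S^\ast = (v_1, \ldots, v_n)$, where $v_1$ is the index of the minimum-value element. By Theorem 2, the DFS stack ordered by decreasing finish time is \emph{a} topological sort of $G$, and since $G$ has exactly one topological sort, the DFS stack must equal $S^\ast$. The remaining subtlety is that the algorithm seeds DFS with $S = (x)$ where $x = \textsc{FindMin}(A)$; I would note that $x = v_1$, so starting the first DFS visit at the minimum-value vertex is consistent with — and in fact forced by — the unique topological sort (the minimum vertex has no in-edges in a complete corresponding graph, so it must appear first in any topological sort, hence must have the largest finish time, hence sits at the top of the stack). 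Thus \textsc{ToArray}($A, S$) rearranges $A$ according to $S^\ast$, which by definition of $S^\ast$ yields the sorted array.

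The main obstacle I anticipate is the bookkeeping at the DFS/topological-sort interface: one must be careful that the convention for the DFS stack (top = greatest finish time, per Theorem 2) matches the convention under which $S^\ast$ is called "the sequence of indices of the sorted array" (smallest value first), and that seeding $S = (x)$ genuinely produces the topological-sort ordering rather than merely some DFS output. This is really a consistency check rather than a deep argument, but it is where an error could hide. A secondary point worth a sentence is why $x$, the minimum index, is well-defined and why a single DFS visit from it suffices to reach all vertices — this follows because a complete corresponding graph containing a Hamiltonian path starting at $v_1$ is reachable in its entirety from $v_1$, so DFS from $x = v_1$ visits every vertex and the outer loop over roots makes no further visits. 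Once these conventions are pinned down, the theorem follows immediately from Theorems 2, 5, 6 and Proposition 1 with no further computation.
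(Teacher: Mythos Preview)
Your proposal is correct and follows essentially the same chain of implications as the paper: Proposition~1 gives that $G$ is a complete corresponding graph, Theorem~6 yields a Hamiltonian path, and Theorem~5 (together with Theorem~2 for the DFS-to-topological-sort step) then forces the DFS stack to coincide with the index sequence of the sorted array. If anything, you are more careful than the paper about the DFS/stack conventions and the role of seeding at the minimum vertex; the paper's proof dispatches those points in a single clause.
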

\begin{proof}
    In \textsc{TrivialGraphSort}, we first construct a corresponding graph $G$ of $A$ of $n$ elements with reach of $\lfloor n / 2 \rfloor$.
    Proposition 1 guarantees that $G$ is a complete corresponding graph.
    Then by Theorem 6, we know $G$ contains a Hamiltonian path.
    Then we run DFS starting at the minimum valued element which is an end-vertex of the Hamiltonian path, which gives us the topological sort which is the sequence of indices for the sorted array of $A$ by Theorem 5.
    Then we get the actual array from \textsc{ToArray} which is necessarily the sorted array $A'$ of $A$ as claimed.
\end{proof}

Thus we have completed showing that this algorithm indeed works where we sort an array using a corresponding graph and its topological sort.

\subsection{Run-time}
First we have $r = \lfloor n / 2 \rfloor$, so to construct the graph we have a run-time of $O(n (\lfloor n / 2 \rfloor)^2)$ by a previously stated formula, which equates to $O(n^3)$.
Next to find the minimum, our function has a run-time of $\Theta(n)$.
We have size
\begin{align*}
    m &= (n - 1) + (n - 2) + \cdots + (1) \\
    &= \sum_{k = 1}^{n - 1} (n - k) \\
    &= n(n - 1) - \frac{n(n - 1)}{2} \\
    &= \frac{n(n - 1)}{2}
\end{align*}
for $G$.
But, since $G$ contains a Hamiltonian path and we start at the minimum value vertex, we actually have a run-time of $\Theta(n)$ for DFS in this case.
Lastly to convert to an array it has a run-time of $\Theta(n)$ obviously.
Therefore, in total we have a run-time of
\begin{align*}
    O(n^3) + \Theta(n) + \Theta(n) + \Theta(n) = O(n^3)
\end{align*}
which is extremely inefficient since we also know the algorithm runs in time $\Omega(n^2)$ obviously too.
However, this algorithm mainly provides us with base strategy that we can employ further in the paper.

\subsection{Example}
Let's review an example of the above described algorithm.
Let,
\begin{align*}
    A = [3.5, 2, 9, 11, 1, -2.2, 5].
\end{align*}
Then our corresponding graph is,
\begin{center}
    \begin{tikzpicture}
        \node (1) at (-1, 0) {3.5};
        \node (2) at (1, 0) {2};
        \node (3) at (1.75, -1.25) {9};
        \node (4) at (1.25, -2.25) {11};
        \node (5) at (0, -2.75) {1};
        \node (6) at (-1.25, -2.25) {-2.2};
        \node (7) at (-1.75, -1.25) {5};

        \path[->] (2) edge (1);
        \path[->] (1) edge (3);
        \path[->] (1) edge (4);
        \path[->] (5) edge (1);
        \path[->] (6) edge (1);
        \path[->] (1) edge (7);
        \path[->] (2) edge (3);
        \path[->] (2) edge (4);
        \path[->] (5) edge (2);
        \path[->] (6) edge (2);
        \path[->] (2) edge (7);
        \path[->] (3) edge (4);
        \path[->] (5) edge (3);
        \path[->] (6) edge (3);
        \path[->] (7) edge (3);
        \path[->] (5) edge (4);
        \path[->] (6) edge (4);
        \path[->] (7) edge (4);
        \path[->] (6) edge (5);
        \path[->] (5) edge (7);
        \path[->] (6) edge (7);
    \end{tikzpicture}.
\end{center}
To construct we take at most $7 \cdot \lfloor 7 / 2 \rfloor \cdot 7 = 147$ comparisons to construct the 21 edges of the above graph since we must also account for the background work of adding to the adjacency list.
Now we run DFS on our graph as specified for the algorithm (starting at the minimum element).
The bold path represents the path of discovery:
\begin{center}
    \begin{tikzpicture}
        \node (1) at (-1, 0) {3.5};
        \node (2) at (1, 0) {2};
        \node (3) at (1.75, -1.25) {9};
        \node (4) at (1.25, -2.25) {11};
        \node (5) at (0, -2.75) {1};
        \node (6) at (-1.25, -2.25) {-2.2};
        \node (7) at (-1.75, -1.25) {5};

        \path[->] (2) edge[ultra thick] (1);
        \path[->] (1) edge (3);
        \path[->] (1) edge (4);
        \path[->] (5) edge (1);
        \path[->] (6) edge (1);
        \path[->] (1) edge[ultra thick] (7);
        \path[->] (2) edge (3);
        \path[->] (2) edge (4);
        \path[->] (5) edge[ultra thick] (2);
        \path[->] (6) edge (2);
        \path[->] (2) edge (7);
        \path[->] (3) edge[ultra thick] (4);
        \path[->] (5) edge (3);
        \path[->] (6) edge (3);
        \path[->] (7) edge[ultra thick] (3);
        \path[->] (5) edge (4);
        \path[->] (6) edge (4);
        \path[->] (7) edge (4);
        \path[->] (6) edge[ultra thick] (5);
        \path[->] (5) edge (7);
        \path[->] (6) edge (7);
    \end{tikzpicture}.
\end{center}
To run DFS we witness that it takes only 6 edge and 7 vertex traversals since there exists a Hamiltonian path.
Then our DFS stack is our sorted array:
\begin{align*}
    A' = [-2.2, 1, 2, 3.5, 5, 9, 11].
\end{align*}
Thus we have sorted the array as desired using the trivial algorithm.
Notice that the most costly operation is constructing the graph which is bounded at a whopping 147 array comparisons.

\section{Properties of the DFS Forest}
The DFS algorithm will visit vertices in some fashion and the edges it uses to traverse the graph can be collected into a forest.
\begin{definition}
    Let $G$ be a graph, and let $F \subseteq G$ be a graph with $V(F) = V(G)$ and every edge $e = uv$ is in $F$ if and only if DFS on $G$ traverses that edge.
    Then $F$ is a resulting DFS forest of $G$.
\end{definition}
\begin{remark}
    We know every DFS forest is a forest because DFS visits every vertex once which allows for no cycles.
\end{remark}
\begin{remark}
    Note that if $G$ is a comparison graph, then the resulting DFS forest $F$ is also a comparison graph.
    In this way, the DFS forest of $F$ is also a comparison graph.
\end{remark}

It is assumed the reader has prior knowledge of DFS forests and forests in general.
Now let's prove a point that is the basis of our next algorithm.
\begin{theorem}
    Let $G$ be a comparison graph of array $A$, and let $F$ be the resulting DFS forest of $G$.
    Then any path in $F$ is a subsequence of indices from the sorted array of $A$ (not necessarily consecutive).
\end{theorem}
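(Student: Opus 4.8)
The plan is to argue directly from the defining property of a comparison graph: every arc $(u,v)$ witnesses $A[u] < A[v]$. First I would take an arbitrary path $P = (v_1, v_2, \ldots, v_k)$ in the DFS forest $F$. Since $F \subseteq G$ and $F$ is itself a comparison graph (noted in the remark following the definition of the DFS forest), each consecutive pair gives an arc $(v_i, v_{i+1}) \in E(F) \subseteq E(G)$, hence $A[v_i] < A[v_{i+1}]$ for all $1 \le i < k$. Chaining these strict inequalities yields $A[v_1] < A[v_2] < \cdots < A[v_k]$.

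Next I would translate this chain of value-inequalities into a statement about positions in the sorted array $A'$ of $A$. Let $\sigma$ denote the sorted order, so that position of index $j$ in $A'$ is determined by the rank of $A[j]$ among the (distinct, by the standing assumption) values of $A$. Because $A[v_1] < A[v_2] < \cdots < A[v_k]$, the ranks of $v_1, v_2, \ldots, v_k$ are strictly increasing, so $v_1, \ldots, v_k$ appear in $A'$ in exactly that order — though possibly with other indices interleaved between them, which is precisely why the subsequence need not be consecutive. That is the assertion of the theorem.

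The argument is short because the heavy lifting has already been done: the only facts needed are (i) that a path in $F$ is a path in $G$, (ii) that arcs of a comparison graph encode strict value-comparisons, and (iii) transitivity of $<$ on the distinct values of $A$. I would make sure to invoke the remark that $F$ is a comparison graph (so that "path in $F$" legitimately means a directed path whose arcs respect the $<$ relation), and to recall that distinctness of array values is a standing assumption, so the strict inequalities are genuine and the induced order in $A'$ is unambiguous. There is no real obstacle here; the one thing to be careful about is phrasing — "subsequence of indices from the sorted array" must be read as: the sequence $(v_1, \ldots, v_k)$ occurs in order within the index sequence of $A'$, not that the $v_i$ are contiguous there. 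A one-line remark after the proof clarifying this (and perhaps pointing to the special case $k = n$, which recovers Theorem 5's Hamiltonian-path situation) would be worth including.
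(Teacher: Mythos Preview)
Your proposal is correct and follows essentially the same route as the paper: take a path $P=(v_1,\ldots,v_k)$ in $F\subseteq G$, use the comparison-graph arc property to get $A[v_1]<\cdots<A[v_k]$, and conclude that $P$ is an ordered subsequence of the sorted index sequence. The paper's proof is just a terser version of yours, omitting the explicit rank function $\sigma$ and the clarifying remarks you add at the end.
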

\begin{proof}
    Since $F \subseteq G$, every arc $e = uv$ in $F$ has the property that $A[u] < A[v]$.
    Let $P = (v_1, \ldots, v_k)$ be a path in $F$ where $v_i$ is a vertex for all $1 \leq i \leq k$.
    Then we have
    \begin{align*}
        A[v_1] < \ldots < A[v_k]
    \end{align*}
    which is necessarily sorted.
    Thus the path $P$ is a subsequence of indices from the sorted array of $A$.
\end{proof}

Note we will start abusing the notion of the connected component (or general component) by having it refer to the connected components of the underlying graph of some directed graph.
For our next theorem, let's first introduce some new definitions.
\begin{definition}
    A rooted tree is a tree with a distinguished vertex called the root such that the root has no incoming edges.
\end{definition}
\begin{remark}
    The root may also be the head of a Hamiltonian path in a comparison graph. This is just abuse of the traditional definition.
\end{remark}
\begin{definition}
    A leaf of a rooted tree is an end-vertex of the tree.
\end{definition}
\begin{definition}
    An internal vertex of a rooted tree is a vertex of the tree that is not the root or a leaf of the tree.
\end{definition}

\begin{remark}
    The connected components of a resulting DFS forest are all rooted trees.
\end{remark}

Now let's show an important feature of corresponding graphs of a reach one and their resulting DFS forests.
\begin{proposition}
    Let $G$ be a corresponding graph of array $A$ and reach one, and let $F$ be the resulting DFS forest of $G$.
    Let $T$ be any component of $F$. Then every root of $T$ has at most two outgoing arcs, every leaf of $T$ has at most one incoming arc, and every internal vertex of $T$ has one incoming and one outgoing arc.
\end{proposition}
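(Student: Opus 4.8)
The plan is to bound the in- and out-degrees of each vertex in the DFS forest $F$ by exploiting the fact that a reach-one corresponding graph is extremely sparse: each vertex $i$ is only ever compared with $i-1$ and $i+1$ (modulo $n$), so every vertex has degree exactly $2$ in the underlying graph of $G$. First I would record this structural fact: by the definition of a corresponding graph of reach one, $\mathrm{adj}(i)$ together with the in-neighbours of $i$ consists of at most the two array-neighbours $i-1,i+1$, so the total degree (in plus out) of any vertex in $G$ is at most $2$, hence the same bound holds for the subgraph $F\subseteq G$. This is the quantitative engine behind the whole statement.

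Next I would handle the three cases by combining this degree-$2$ bound with the basic shape of a rooted tree in a DFS forest. For an internal vertex $v$ of $T$: being internal, $v$ is neither the root nor a leaf, so in $T$ it has at least one tree-edge coming in (from its parent) and at least one tree-edge going out (to a child); since its total degree in $F$ is at most $2$, it has \emph{exactly} one incoming and \emph{exactly} one outgoing arc. For a leaf $v$ of $T$ other than the root: a leaf has no children in $T$, so its only tree-edge is the one from its parent, giving at most one incoming arc (and by the degree bound it could in principle have one more outgoing arc, but a leaf has no tree-children, and any outgoing tree-edge would make it internal — so it has at most one incoming arc, as claimed; here I should be slightly careful that the claim only asserts ``at most one incoming arc,'' which is immediate). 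For the root $r$ of $T$: the root has no incoming tree-edge (by the definition of a rooted tree in this paper), so both of its at-most-two incident arcs in $F$ must be outgoing; hence the root has at most two outgoing arcs.

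I would then tidy up the degenerate configurations — a component that is a single isolated vertex (which is simultaneously ``root'' and ``leaf'' and vacuously satisfies all three bounds), or a two-vertex component — by noting that the degree-$2$ bound already covers them, since in all cases the relevant counts are weak inequalities (``at most''). The one place to be careful, and what I expect to be the main obstacle, is pinning down precisely \emph{why} the reach-one comparison restricts a vertex's neighbourhood to $\{i-1,i+1\}$ and therefore its total degree to $2$: one must check that the w鬼ap-around in the definition of a corresponding graph does not secretly add extra neighbours (it does not — each of $i-1$ and $i+1$ contributes at most one arc, oriented by the comparison), and that passing from $G$ to the DFS forest $F$ can only delete arcs, never reorient or add them, so the degree bound is inherited. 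Once that bookkeeping is done, the three cases are immediate from the definitions of root, leaf, and internal vertex.
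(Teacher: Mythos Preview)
Your proposal is correct and follows essentially the same route as the paper's proof: both derive the total-degree-$2$ bound from the reach-one definition, pass it to $F\subseteq G$, and then read off the three cases from the definitions of root, leaf, and internal vertex. Your treatment is if anything more careful---the paper states the degree bound in one line and does not pause over wrap-around or degenerate components---but the argument is the same.
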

\begin{proof}
    First note that every vertex of $G$ has a total degree of two, i.e. the sum of the in-degree and out-degree by definition of a reach of one.
    Then since $F \subseteq G$, every vertex of $F$ has a total degree of at most two.
    Let $T$ be any connected component of $F$.
    Then since, by definition, roots have an in-degree of zero, every root of $T$ can have at most two outgoing arcs.
    Since we visit every vertex once in DFS, no vertex in $T$ has an in-degree of two, and by definition a leaf has no outgoing arc, so it can have at most one incoming arc.
    Since by definition, an internal vertex in $T$ is not the root or a leaf of $T$, so it has at least one incoming and outgoing arc, and since our total degree is at most two, necessarily every internal vertex in $T$ has one incoming and one outgoing arc precisely.
\end{proof}
\begin{remark}
    Every rooted tree of $F$ has at most two sub-trees stemming from the root, and those sub-trees are paths.
\end{remark}

\section{The Merge Process}
We observed that in the trivial algorithm, the most expensive procedure was to construct the graph which took $O(n^3)$ time since the reach was dependent on $n$.
If $r$ is fixed at a constant value, the run-time for construction simply becomes $\Theta(n)$ since run-time is $O(nr^2)$.
However, when we have constant $r$, the corresponding graph does not necessarily contain a Hamiltonian path.
Hence, in this section we will develop a method of tackling this issue.

Let $G$ be the corresponding graph of array $A$ with constant reach $r$.
Then by DFS on $G$ we achieve a DFS forest $F$ which contains $k$ connected components.
Now let's say we merge the components in a way such that new graph $H$ and its resulting DFS forest $F'$, have $\lceil k / 2 \rceil$ connected components.
This idea will be more refined into a concrete algorithm in the next section; right now, we just want to develop such a merge process.

There are many ways we can approach this problem, but we will particularly attempt at merging pairs of components to generate a component that contains a Hamiltonian path.
This is easier to tackle with a reach of one since by Proposition 2, every tree in $F$ has at most two sub-trees stemming from the root.
Let a vertex $v$ be the root of a tree $T$ in $F$ such that $u$ and $w$ are the neighbors of $v$.
Then we cannot conclude anything about the order or values of the vertices in the sub-trees that stem from $u$ and $w$, call them $R$ and $S$.
Hence, we should merge sub-trees of trees where they exist before we start merging the components themselves.
By Proposition 2, both $R$ and $S$ are paths, and we wish to merge them into one graph $H$ that is also a comparison graph such that we conserve the comparison edge property and $H$ contains a Hamiltonian path.
It is important that $H$ contains a Hamiltonian path because then if we connect $v$ to the beginning of the path in $H$, we achieve a connected comparison graph with vertex set $V(T)$ and a Hamiltonian path.

Note we will use the term Hamiltonian path for individual components of a graph too from now on than the traditional definition for the entirety of a graph.

The merge process will take advantage of the fact that $R = (a_1, \ldots, a_x)$ and $S = (b_1, \ldots, b_y)$ are paths.
First we will compare $a_1$ to $b_1$ and add the corresponding arc between $a_1$ and $b_1$.
Without loss of generality, assume $a_1$ has lesser value than $b_1$.
Then we compare $a_2$ to $b_1$ and add their corresponding arc, and continue this process till we find some $a_i$ that has greater value than $b_1$ for some $1 < i \leq x$.
Then we compare $a_i$ to $b_2$ and continue the process till we reach the end of either $R$ or $S$.
Let the resulting graph of these operations be $H$.
Now we must show that $H$ contains a Hamiltonian path to satisfy our requirements for our merge process.
\begin{theorem}
    Let $T$ be a connected component of $F$, the resulting DFS forest of graph $G$.
    If $T$ contains two sub-trees stemming from the root, merge the two sub-trees using the above process, then the merged graph, call it $H$, contains a Hamiltonian path.
\end{theorem}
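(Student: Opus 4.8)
The plan is to produce an explicit Hamiltonian path in $H$, namely the vertices of $V(H) = V(R) \cup V(S)$ listed in increasing order of their $A$-values. Write this merged sorted sequence as $c_1, c_2, \ldots, c_{x+y}$, so that $A[c_1] < A[c_2] < \cdots < A[c_{x+y}]$, where each $c_t$ equals some $a_j$ or some $b_k$. Every arc added during the merge is oriented from the smaller to the larger value, and the path arcs already present in $R$ and $S$ lie in $G$, a comparison graph, so all arcs of $H$ satisfy the comparison edge property. Hence it suffices to prove that $(c_t, c_{t+1}) \in E(H)$ for every $1 \le t < x+y$, and then $(c_1, \ldots, c_{x+y})$ is automatically a path through all vertices, i.e. a Hamiltonian path.

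First I would pin down the structure we may assume: by Proposition 2 and the remark following it, $R = (a_1, \ldots, a_x)$ and $S = (b_1, \ldots, b_y)$ are paths, and by Theorem 8 each is sorted, so $A[a_1] < \cdots < A[a_x]$ and $A[b_1] < \cdots < A[b_y]$; without loss of generality $A[a_1] < A[b_1]$. I would then recast the merge as a two-pointer scan with a pointer $p$ into $R$ and a pointer $q$ into $S$: at each step we compare $a_p$ with $b_q$, add the arc from the smaller-valued endpoint to the larger-valued one, and advance the pointer of the smaller one, halting as soon as a pointer runs off the end of its path. The governing invariant is that after $t$ steps the set of vertices already ``emitted'' (made the smaller endpoint of a comparison and stepped past) is exactly $\{c_1, \ldots, c_t\}$; in particular, at the moment $c_t$ is emitted the two pointers occupy exactly the positions at which the next comparison involves $c_t$.

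With this in hand the core argument is a short case analysis. Suppose $c_t = a_p$ (the case $c_t = b_q$ is symmetric by swapping the roles of $R$ and $S$). By the invariant, $c_t$ is emitted by the step comparing $a_p$ against the current element $b_q$ of $S$, which finds $A[a_p] < A[b_q]$ and therefore adds the arc $(a_p, b_q)$. Now $c_{t+1}$ is the smaller of $a_{p+1}$ and $b_q$: if $c_{t+1} = a_{p+1}$ then $(c_t, c_{t+1})$ is a path arc of $R \subseteq H$, and if $c_{t+1} = b_q$ then $(c_t, c_{t+1}) = (a_p, b_q)$ is precisely the arc just added, so $(c_t, c_{t+1}) \in E(H)$ in both cases. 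For termination, suppose $S$ is exhausted: this happens because the last comparison found $A[b_y] < A[a_p]$ and added $(b_y, a_p)$, and since $b_y$ is the maximum over $S$, all leftover vertices $a_p, a_{p+1}, \ldots, a_x$ exceed $b_y$; hence $c_t = b_y$, $c_{t+1} = a_p$ with arc $(b_y, a_p) \in E(H)$, and the tail $c_{t+1}, \ldots, c_{x+y}$ is $a_p, a_{p+1}, \ldots, a_x$ joined by path arcs of $R$. The case where $R$ is exhausted first is symmetric. This covers every consecutive pair, so $(c_1, \ldots, c_{x+y})$ is a Hamiltonian path in $H$.

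I expect the main obstacle to be making the emission invariant precise and using it correctly — specifically, arguing that when the smaller endpoint's pointer advances, the other pointer still sits where we need it, so that the arc between the just-emitted vertex and the very next vertex in sorted order is either a pre-existing path arc or the arc the merge just created, and handling the terminal step where one path is exhausted (ensuring the last emitted vertex is linked to the head of the leftover tail of the other path). The remaining ingredients — that $H$ is a comparison graph on the values of $A$, and that $R$ and $S$ are sorted paths — follow immediately from Proposition 2 and Theorem 8.
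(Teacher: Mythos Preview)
Your proof is correct and follows essentially the same approach as the paper's: both construct the Hamiltonian path as the vertices of $R \cup S$ listed in increasing $A$-value order and argue that the merge process supplies exactly the cross arcs needed between consecutive elements. Your version is considerably more rigorous, with an explicit two-pointer emission invariant and careful termination analysis, whereas the paper's proof is a short sketch (it also starts the path at the root $v$ rather than at $c_1$, but this is a trivial extension since $v$ has an arc to $c_1 \in \{a_1, b_1\}$).
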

\begin{proof}
    We will show that $H$ contains a Hamiltonian path by construction.
    Let $v \in V(T)$ be the root of $T$, and let $u_1$ and $w_1$ be the two vertices adjacent from $v$.
    Let $P$ be our path to which we will add arcs.
    Without loss of generality, assume $u_1$ has lesser value than $w_1$, so add the edge $vu_1$ to $P$.
    Let $u_2$ be the vertex adjacent from $u_1$ that is not $w_1$ (must exist because $u_1$ and $u_2$ are part of the same sub-tree of $T$).
    Then if $u_2$ has lesser value than $w_1$, add the edge $u_1u_2$ to $P$, else add $u_1v_1$ to $P$.
    In this way we construct our path by adding an arc from the current vertex to an adjacent vertex of minimum value.
    These ``cross'' arcs between vertices exist since we have merged in a particular fashion which allows for their existence.
    Hence our path $P$ will eventually contain all the vertices, so it is a Hamiltonian path by definition in $H$ as claimed.
\end{proof}
\begin{remark}
    The construction of this path is analogous to how DFS would operate on $H$ with a sorted adjacency list representation of $H$.
\end{remark}

Thus we have merged the sub-trees of $T$ and achieved a new comparison graph that contains a Hamiltonian path.
Any tree in $F$ that has a root with only one outgoing vertex is a Hamiltonian path itself.
Thus the process of merging pairs of connected components boils down to merging paths, similar to above.
However it is important to know how to traverse only the Hamiltonian path of the connected components which is where we take advantage of the sorted property of the adjacency list.
We will now develop another merge process for a pair of components that contain a Hamiltonian path.

Let $H_1$ and $H_2$ be two components of $H$ where every component of $H$ contains a Hamiltonian path.
Although we can find the Hamiltonian paths of both graphs by running DFS on both individually, let's simply start at the minimum valued vertices of both $H_1$ and $H_2$, call them $a_1$ and $b_1$.
Now we compare the two and add a corresponding arc.
Without loss of generality, assume $a_1$ has lesser value than $b_1$.
Then let $a_2$ be the smallest value vertex adjacent from $a_1$ (similar to a recursive DFS visit), and compare $a_2$ to $b_1$ and add the corresponding arc.
Continue this process till we reach some $a_i \in V(H_1)$ that has value greater than $b_1$.
Then we compare $a_i$ to $b_1$ and add their corresponding arc, and continue in the same fashion with $b_j \in V(H_2)$.
We stop this process till we have no vertices adjacent from the current one in either $H_1$ and $H_2$.
We have essentially followed the DFS strategy but added our corresponding arcs in the process similar to the merging of the sub-trees.
Lastly we must show that this merge of $H_1$ and $H_2$ produces a graph that includes a Hamiltonian path.
\begin{theorem}
    Let $F$ be a resulting DFS forest of $G$.
    Let $H = F$ if all components of $F$ are paths, otherwise merge the sub-trees of $F$ and let $H$ be the result.
    Let $H_1$ and $H_2$ be a pair of two distinct components of $H$.
    Then if we run the above mentioned process, our resulting graph $H'$ contains a Hamiltonian path.
\end{theorem}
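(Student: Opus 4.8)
\emph{Proof proposal.} The plan is to mirror the constructive strategy of Theorem~9: instead of exhibiting a spanning path directly, I would reduce the statement to a claim about individual arcs and then check those arcs survive the merge. First recall that by Theorem~9 --- together with the trivial fact that a path is its own Hamiltonian path --- every component of $H$ contains a Hamiltonian path; in particular $H_1$ and $H_2$ each contain one, which is unique by Lemma~1 and which, by (the reasoning of) Theorems~4 and~5, lists that component's vertices in increasing $A$-value order. Write the Hamiltonian path of $H_1$ as $(p_1,\ldots,p_x)$ with $A[p_1]<\cdots<A[p_x]$ and that of $H_2$ as $(q_1,\ldots,q_y)$ with $A[q_1]<\cdots<A[q_y]$. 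Since every arc the merge adds is ``the corresponding arc'' between two vertices, it points from the smaller value to the larger, so $H'$ remains a comparison graph; hence $V(H')=V(H_1)\cup V(H_2)$ has a well-defined sorted order $(c_1,\ldots,c_{x+y})$, and to produce a Hamiltonian path of $H'$ it suffices to show that every consecutive pair $(c_k,c_{k+1})$ of this order is an arc of $H'$ --- for then $(c_1,\ldots,c_{x+y})$ is itself a spanning path.

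The easy case is a consecutive pair lying in the same old component, say $c_k,c_{k+1}\in V(H_1)$. Then no vertex of $V(H_1)$ has value strictly between $A[c_k]$ and $A[c_{k+1}]$, so $c_k$ and $c_{k+1}$ are consecutive on the Hamiltonian path of $H_1$, and the arc $(c_k,c_{k+1})$ already lies in $H_1\subseteq H'$; symmetrically for $H_2$. So the content of the theorem is the remaining ``transition'' case, where $c_k$ and $c_{k+1}$ come from different components.

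For this I would track the merge by a pair of pointers, one moving along $(p_1,\ldots,p_x)$ and one along $(q_1,\ldots,q_y)$, and maintain the invariant that at every stage the two pointed-to vertices are the minimum-value not-yet-consumed vertices of $V(H_1)$ and of $V(H_2)$ respectively, that every consumed vertex has value below both current pointers, and that each step adds the corresponding arc between the two current pointers and then advances the pointer sitting on the smaller one. Granting this invariant, consider a transition pair with, say, $c_k\in V(H_1)$, $c_{k+1}\in V(H_2)$, $A[c_k]<A[c_{k+1}]$, and look at the step in which the $H_1$-pointer advances \emph{past} $c_k$: at that moment it is at $c_k$ and the $H_2$-pointer is at some $q$ with $A[q]>A[c_k]$. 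The vertex $c_{k+1}$ is not consumed then (consumed $H_2$-vertices have value below $A[c_k]<A[c_{k+1}]$), so the $H_2$-pointer is at $c_{k+1}$ or earlier; but every $H_2$-vertex strictly earlier than $c_{k+1}$ has value below $A[c_k]$ (as $c_{k+1}$ is the smallest $H_2$-vertex above $A[c_k]$), contradicting $A[q]>A[c_k]$. Hence $q=c_{k+1}$, and that comparison adds the arc $(c_k,c_{k+1})$. The one boundary point to dispose of is termination: the merge halts only after both pointers have run off the ends of their paths (switching which side is advanced once one pointer stalls), so no transition step is skipped; and if one were to read the process as stopping earlier, the un-consumed tail is a terminal segment of a single component's Hamiltonian path --- hence already a path in $H'$ --- joined to the rest by the last transition arc, so no consecutive pair is missed anyway.

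Combining the two cases, $H'$ contains the arc between every consecutive pair of the sorted order of $V(H')$, so that order is a Hamiltonian path of $H'$, as required; equivalently (cf.\ the remark after Theorem~9) it is exactly the path a DFS on $H'$ with sorted adjacency lists, started at the global minimum vertex, would traverse. I expect the only real obstacle to be stating the pointer invariant precisely enough that it provably persists through every step of the loosely-described merge --- in particular across the side-switch that occurs when one path is exhausted and at the moment of termination; once that bookkeeping is nailed down, the transition-arc argument above goes through mechanically.
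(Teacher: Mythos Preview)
Your proposal is correct and follows essentially the same approach as the paper: both argue that the sorted order of $V(H_1)\cup V(H_2)$ is a Hamiltonian path in $H'$ because the merge process supplies every needed ``cross'' arc between consecutive vertices coming from different components, while the original Hamiltonian paths of $H_1$ and $H_2$ supply the rest. The difference is only one of rigor --- the paper's proof is a two-line sketch (``all these arcs exist due to our merge process''), whereas you spell out the pointer invariant and the termination boundary case explicitly.
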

\begin{proof}
    If $H = F$, then $H_1$ and $H_2$ are Hamiltonian paths, so let $P_1 = H_1$ and $P_2 = H_2$.
    Otherwise, by Theorem 9, we know $H_1$ and $H_2$ contains Hamiltonian paths, call them $P_1$ and $P_2$.
    Then traverse the graph similar to how we merge, i.e. we only ``cross'' between the two graphs when the direction of edges switches between to the two graphs.
    Thus we get a path $P$ that starts at the minimum value vertex of both $H_1$ and $H_2$, and then from there we continue to the next minimum and so on.
    All these arcs exist due to our merge process.
    Hence our resulting graph $H'$ contains a Hamiltonian path as claimed.
\end{proof}

Now let's summarize the merge processes.
In the first, we run DFS on our corresponding graph $G$ with reach one, and let $F$ be the resulting forest then merge the sub-trees of the trees in $F$.
And the second process is to merge consecutive components of $H$ such that all the components of $H$ contain Hamiltonian paths to gain a new graph called $H'$ which contains approximately half number of components in $H$ as desired.
Also every component of $H'$ contains a Hamiltonian path.
Note in practice, $H$ will either be the DFS forest itself or the graph after the sub-tree merge process on the DFS forest.

\subsection{Pseudo-code}
In our case, we have two merge processes actually: one that merges sub-trees and one that merges the components.
Notice that both these processes simply merge on the Hamiltonian path starting at some root.
In the case of the sub-trees we let the two roots be the neighbors of the root of the component.
On the other hand, when we merge the components let the roots be the roots of the contained Hamiltonian paths.
Again we will use the adjacency list representation.
So the pseudo-code for this helper function is the following where $H$ is a modified version of the DFS forest, $F \subseteq H$ resulting from corresponding graph $G$ of array $A$.
Note $H$ is a comparison graph of array $A$.
\begin{algorithmic}
    \Function {Merge}{$H, A, x, y$}
        \State $u = x$, $v = y$
        \While {$u \neq \mathrm{nil}$ and $v \neq \mathrm{nil}$}
            \If {$A[u] < A[v]$}
                \State \Call {AddArc}{$H, A, u, v$}
                \State $u = \mathrm{adj}[u].\mathrm{front}$
            \ElsIf {$A[u] > A[v]$}
                \State \Call {AddArc}{$H, A, v, u$}
                \State $v = \mathrm{adj}[v].\mathrm{front}$
            \EndIf
        \EndWhile
    \EndFunction
\end{algorithmic}
Note $x$ and $y$ are the roots that we define for that specific tree.
When we traverse to the ``child'' of the current vertex we are essentially traversing to the smallest value vertex adjacent from the current one.

Before we discuss the main merge procedures, let's first find the components of some DFS forest, specifically their roots.
After running DFS on some graph $G$, all vertices in $G$ with no incoming vertices will have no parent from DFS.
Note the converse is also true since we start DFS from the minimum valued element which necessarily is not the terminus of any arc.
Thus all the vertices after DFS that have no parent are the roots of the components of the DFS forest.
The following is the pseudo-code for this algorithm given the resulting DFS forest $F$.
\begin{algorithmic}
    \Function {FindRoots}{$F$}
        \State $\mathrm{roots} = ()$
        \For {$v \in V(F)$}
            \If {$v.\mathrm{parent} = \mathrm{nil}$}
                \State \Call{Append}{$\mathrm{roots}, v$}
            \EndIf
        \EndFor
        \State \Return $\mathrm{roots}$
    \EndFunction
\end{algorithmic}
Note $\mathrm{roots}$ is a list.
Also note that the list is in no specific order; it is completely arbitrary.

Now that we have our helper functions ready, we can delve into the actual merge process for a DFS forest $F$.
First for every connected component of $F$ we will merge the sub-trees if needed, i.e. if the root of the component has a degree of 2.
\begin{algorithmic}
    \Function {MergeSubTrees}{$H, A$}
        \State $\mathrm{roots} = $ \Call{FindRoots}{$F$}
        \For {$r \in \mathrm{roots}$}
            \If {$\mathrm{adj}[r].\mathrm{length} = 2$}
                \State $x = \mathrm{adj}[r].\mathrm{front}$
                \State $y = \mathrm{adj}[r].\mathrm{back}$
                \State \Call {Merge}{$H, A, x, y$}
            \EndIf
        \EndFor
    \EndFunction
\end{algorithmic}
Note the roots of the components of $H$ remain unchanged.

Then we will merge consecutive components of $F \subseteq H$ where every component of $H$ contains a Hamiltonian path.
So we have the following psuedo-code for given $H$ and corresponding array $A$.
\begin{algorithmic}
    \Function {MergeTrees}{$H, A$}
        \State $\mathrm{roots} = $ \Call{FindRoots}{$F$}
        \For {$j = 1$ to $\lfloor \mathrm{roots}.\mathrm{length} / 2 \rfloor$}
            \State $p = \mathrm{roots}.\mathrm{get}(2j - )$
            \State $q = \mathrm{roots}.\mathrm{get}(2j)$
            \If {$A[p] > A[q]$}
                \State \Call{Delete}{$\mathrm{roots}, p$}
            \Else
                \State \Call{Delete}{$\mathrm{roots}, q$}
            \EndIf
            \State \Call{Merge}{$H, A, p, q$}
        \EndFor
        \State \Return $\mathrm{roots}$
    \EndFunction
\end{algorithmic}
We return the list of ``new roots'' in the merged graph $H$.
It's important we get this list for our next algorithm where we need the specific list and the number of ``roots.''
Note that by roots we mean the roots of the Hamiltonian path; a little abuse of terminology again.

\subsection{Run-time}
We will first analyze the run-time for the helper functions before analyzing the main merge process.
Let $T$ be a component of a DFS forest, and let $T$ be of order $n$.
Let $T_1$ and $T_2$ be two sub-trees of $T$.
Let their respective orders be $n'_1$ and $n'_2$ where $n = n'_1 + n'_2 + 1$ (where the extra addition is for the root).
To merge these two graphs according to our merge processes, we are essentially traversing the respective Hamiltonian paths which are of the same order as their respective graph.
Every vertex can produce at most one arc originating from it because after we create that arc we traverse to another vertex.
Hence, in the worst-case we produce an arc for every vertex (excluding the root), i.e. we produce $n'_1 + n'_2$ arcs.
Also since $T_1$ and $T_2$ have a maximum of one outgoing vertex (otherwise they would not be part of a sub-tree), the run-time to add an arc takes $\Theta(1)$.
Hence to add $n'_1 + n'_2$ arcs it takes time $O(n'_1 + n'_2)$ which is $O(n - 1)$ which is the run-time of the sub-tree merge.

Now let's analyze the merging of any two components that contain Hamiltonian paths.
Let $H_1$ and $H_2$ be these two components of order $n_1$ and $n_2$.
In the worst-case every vertex in $H_1$ and $H_2$ has at most two outgoing vertices, thus to add an arc will still take $\Theta(1)$.
Again the worst-case merge of these two components will be where we add an arc for every vertex, so we have a total run-time of $O(n_1 + n_2)$ in general.

To find the roots for a forest $F$ we simply have a run-time of $O(|V(F)|)$ since it is only one loop, so it's linear with respect to the order.

Now for the main merge processes.
Let $k$ be the number of components of some DFS forest $F$ of order $n$.
In the worst-case, every component (tree) has two sub-trees, so in total we have a run-time of $O(n - k)$ which is also the number of arcs in the original $F$.
Since $1 \leq k \leq n$, the merging of sub-trees takes $O(n)$ time.

Then we must merge consecutive components of some graph $H$ (where every component of $H$ contains a Hamiltonian path), and in the worst-case we merge to the fullest, i.e. we add every possible arc.
Since we merge consecutive trees, if we have $n = n_1 + \cdots + n_k$ for $n_i$ the order of component $i$ of $H$ ($1 \leq i \leq k$), then to merge all the trees is simply
\begin{align*}
    O(n_1 + n_2) + \cdots + O(n_{k - 1} + n_k) = O(n)
\end{align*}
assuming $k$ is even (equal asymptotically when $k$ is odd).
Hence, in general the merging of components takes time $O(n)$ as well.
Thus for both merge processes we take linear time to complete the procedures.

\subsection{Example}
Let's illustrate the merge process we have just discussed.
Consider the following forest with just two trees:
\begin{center}
    \begin{tikzpicture}
        \node (1) at (0, 0) {$-2.2$};
        \node (2) at (-1, -1) {9};
        \node (3) at (1, -1) {1};
        \node (4) at (-1, -2) {11};
        \node (5) at (3, 0) {2};
        \node (6) at (3, -1) {3.5};
        \node (7) at (3, -2) {10.5};
        \node (8) at (1, -2) {10};

        \path[->] (1) edge (2);
        \path[->] (1) edge (3);
        \path[->] (2) edge (4);
        \path[->] (3) edge (8);
        \path[->] (5) edge (6);
        \path[->] (6) edge (7);
    \end{tikzpicture}.
\end{center}
First we will merge the sub-trees:
\begin{center}
    \begin{tikzpicture}
        \node (1) at (0, 0) {$-2.2$};
        \node (2) at (-1, -1) {9};
        \node (3) at (1, -1) {1};
        \node (4) at (-1, -2) {11};
        \node (5) at (3, 0) {2};
        \node (6) at (3, -1) {3.5};
        \node (7) at (3, -2) {10.5};
        \node (8) at (1, -2) {10};

        \path[->] (1) edge (2);
        \path[->] (1) edge (3);
        \path[->] (2) edge (4);
        \path[->] (3) edge (8);
        \path[->] (5) edge (6);
        \path[->] (6) edge (7);

        \path[->] (3) edge[ultra thick] (2);
        \path[->] (2) edge[ultra thick] (8);
        \path[->] (8) edge[ultra thick] (4);
    \end{tikzpicture}.
\end{center}
We add 3 more arcs to the forest to merge the sub-trees which is definitely under the upper bound of $8 - 2 = 6$ arcs (the number of vertices that are not roots).

We continue using the merged components from above; let's merge two components together:
\begin{center}
    \begin{tikzpicture}
        \node (1) at (0, 0) {$-2.2$};
        \node (2) at (-1, -1) {9};
        \node (3) at (1, -1) {1};
        \node (4) at (-1, -2) {11};
        \node (5) at (3, 0) {2};
        \node (6) at (3, -1) {3.5};
        \node (7) at (3, -2) {10.5};
        \node (8) at (1, -2) {10};

        \path[->] (1) edge (2);
        \path[->] (1) edge (3);
        \path[->] (2) edge (4);
        \path[->] (5) edge (6);
        \path[->] (6) edge (7);

        \path[->] (3) edge (2);
        \path[->] (2) edge (8);
        \path[->] (8) edge (4);

        \path[->] (1) edge[bend left, ultra thick] (5);
        \path[->] (3) edge[ultra thick] (5);
        \path[->] (5) edge[ultra thick] (2);
        \path[->] (6) edge[bend right, ultra thick] (2);
        \path[->] (2) edge[ultra thick] (7);
        \path[->] (8) edge[ultra thick] (7);
        \path[->] (7) edge[bend left, ultra thick] (4);
    \end{tikzpicture}.
\end{center}
We add 7 more arcs to our graph which again satisfies our set upper bound of $8 - 1 = 7$ arcs (the number of vertices subtracted by 1).
Thus the size of our graph went from 6 to $6 + 3 = 9$ and then eventually to $9 + 7 = 16$ arcs.
And these two steps complete the merge process for a forest of only two trees.
Of course with multiple trees we just merge consecutive pairs of trees, using this process.

\section{A Divide-and-Conquer Algorithm}
After the buildup of the merge process, it begs us to discuss a Divide-and-Conquer algorithm.
Most importantly, Theorem 10 suggests that if we can generate a graph through a sequence of merges, we will end up with a graph containing a Hamiltonian path, and Theorem 5 asserts that the topological sort of that graph is in fact the sequence of indices of the sorted array.
Thus our goal is to create an algorithm that generates such a sequence of merges.

Let $A$ be an array $n$ elements which we wish to sort.
Also let $G$ be a corresponding graph of an array $A$ with reach one.
Let $F$ be the resulting DFS forest of $G$, and let $F$ have $k$ components which is in the range $1 \leq k \leq n$.
We first merge the sub-trees of the components in $F$, and then we run the merge process on $F$ to achieve a graph $H_1$ with $\lceil k / 2 \rceil$ components, and note each component of $H_1$ contains a Hamiltonian path (locally).
Then we run DFS on $H_1$ and let $F_1$ be the resulting forest which should also have $\lceil k / 2 \rceil$ components (we run DFS on the roots of the components of $H_1$).
Again we run the merge process on $F_1$ and get $H_2$ with $\lceil k / 2^2 \rceil$.
We continue this process till we achieve an $F_i$ such that $F_i$ is a tree, i.e. we have only one component.
Let $S$ be the topological sort of $F_i$ which will be our sequence of indices from the sorted array since $F_i$ is a Hamiltonian path (shown in the proof of correctness).
Therefore, to sort the array using our algorithm it takes $i$ merges, i.e. the number of components in $F_i$ is exactly one.

This algorithm is definitely a Divide-and-Conquer algorithm.
The division is the first run of DFS on the corresponding graph of the array.
The combine and conquer part is the merge process which reduces the trueness of our comparison graphs till we reach a trueness of one, i.e $\tau(H_{i - 1}) = \tau(F_i) = 1$.
Notice however that the number of merges depends on the number of components we have in $F$, the DFS forest of $G$, not the order of $G$.
Yet $k$ is bounded by $n$, so in the worst case we may achieve $n$ components.
Hence this algorithm can perform less computations for certain graph distributions; we will discuss the best-case run-time later in this section.

\subsection{Pseudo-code}
First we construct our corresponding graph $G$ of array $A$ of length $n$ and reach of one.
Then we must run DFS for the first time to get the resulting forest $F$ (we assume DFS replaces the given graph, so technically $G = F$).
Note to run DFS here, the order we visit the provisional roots can be arbitrary for the first time.
However, for the latter runs we must define the visit sequence by the roots, so that in this way we actually find the components of the graph.
Then we merge the sub-trees of $F$, and then merge the trees of $F$.
We continue this process till we have one component only, i.e. a trueness of one (shown in proof of correctness), which is when we stop, and then we run DFS for the last time to get the topological sort.
The following is the pseudo-code for this algorithm.
\begin{algorithmic}
    \Function {GraphSort}{$A$}
        \State $n = A.\mathrm{length}$
        \State $G = $ \Call{ConstructGraph}{$A, n, 1$}
        \State $S = (1, \ldots, n)$
        \State \Call{DFS}{$G, S$}
        \State \Call{MergeSubTrees}{$G, A$}
        \State $\mathrm{roots} = $ \Call{FindRoots}{$G$}
        \While {$\mathrm{roots}.\mathrm{length} > 1$}
            \State $\mathrm{roots} = $ \Call{MergeTrees}{$G, A$}
            \State $S = \mathrm{roots}$
            \State \Call{DFS}{$G, S$}
        \EndWhile
        \State \Return \Call{ToArray}{$A, S$}
    \EndFunction
\end{algorithmic}
Note that our DFS algorithm will essentially replace the graph given with its resulting DFS forest.
The dynamics with the DFS stack remain the same as the trivial algorithm.
Also note that we need to merge the sub-trees only once (at the beginning).
We will explore why we do this in the next part where we prove for correctness.

\subsection{Correctness}
Before we prove for correctness let's first prove why we only merge the sub-trees at the beginning.
\begin{lemma}
    Let $G$ be a corresponding graph of an array $A$ with $n$ elements and reach of one.
    Let $F$ be the resulting DFS forest of $G$.
    Then we  merge the sub-trees and components of $F$, and let $H$ be the resulting graph.
    Now run DFS on $H$ with our visiting list as the roots of $H$, and let $F'$ be the resulting forest.
    Then, all the components of $F'$ are directed paths.
\end{lemma}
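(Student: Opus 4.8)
The plan is to exploit the fact that, after the merging round, every component of $H$ carries a Hamiltonian path, and that DFS run on a comparison graph whose adjacency lists are sorted in ascending value order does nothing more than trace that Hamiltonian path, contributing no branching tree edges.

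First I would record two standing facts. By Theorem 9 the sub-tree merge turns every component whose root branches into one containing a Hamiltonian path, while components of $F$ that are already paths trivially contain one; by Theorem 10 merging a pair of such components again yields a component containing a Hamiltonian path, and any component left unmerged (when the count is odd) keeps its own. Hence every component $C$ of $H$ contains a Hamiltonian path $P_C=(v_1,\dots,v_m)$, and by the comparison-graph property \(A[v_1]<\cdots<A[v_m]\). The second fact is that all adjacency lists are maintained in ascending order of $A$-value: this is exactly what \textsc{AddArc} enforces, it is preserved by \textsc{Merge}, and it survives the deletion of non-tree edges performed by DFS. Consequently DFS scans the out-neighbours of a vertex from least value to greatest, and since no arc of a comparison graph can cross between components, DFS enters $C$ exactly once, at the vertex of $C$ that appears in the visiting list, namely the minimum-value vertex $v_1$ (the root returned for $C$ by \textsc{MergeTrees}).

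The key step is the observation that $v_{i+1}$ is the minimum-value out-neighbour of $v_i$ in $H$: the arc $(v_i,v_{i+1})$ lies in $H$ so $v_{i+1}\in\mathrm{adj}[v_i]$, and any out-neighbour $w$ of $v_i$ has \(A[w]>A[v_i]\), hence $w\in\{v_{i+1},\dots,v_m\}$, among which $v_{i+1}$ has least value. From this I would run a downward induction on $i$ on the statement ``when \textsc{Visit}$(v_i)$ is called with $v_i,\dots,v_m$ all unvisited, it visits exactly $v_i,\dots,v_m$, adds precisely the tree edges $(v_i,v_{i+1}),\dots,(v_{m-1},v_m)$, and adds no other tree edge with tail in $\{v_i,\dots,v_m\}$.'' The base case $i=m$ is immediate since $v_m$ has the largest value in $C$ and hence no out-neighbour. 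For the step, \textsc{Visit}$(v_i)$ marks $v_i$, finds $v_{i+1}$ first in its sorted list and unvisited, recurses (tree edge $(v_i,v_{i+1})$); the inductive hypothesis applied to $v_{i+1}$ says this recursion visits all of $v_{i+1},\dots,v_m$; so upon backtracking to $v_i$ every remaining out-neighbour $w\neq v_{i+1}$, which lies in $\{v_{i+2},\dots,v_m\}$ because \(A[w]>A[v_{i+1}]\), is already visited and yields only a forward edge. Applying this with $i=1$ — valid since all of $C$ is unvisited when DFS first reaches $v_1$ — shows the DFS tree of $C$ equals the directed path $P_C$. Since each component of $F'$ is the DFS tree of some component of $H$, every component of $F'$ is a directed path.

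The main obstacle I expect is the backtracking bookkeeping in the inductive step: one must argue cleanly that when DFS returns to a vertex $v_i$ all of its other out-neighbours have already been discovered inside the recursion from $v_{i+1}$, which is precisely where the ascending-order adjacency lists and the monotone value structure of $P_C$ do the work (this mirrors the path-construction argument in the proof of Theorem 9 and the remark following it). A secondary point needing a line of care is that DFS genuinely enters each component at the head of its Hamiltonian path; if one prefers not to rely on \textsc{MergeTrees} returning exactly that vertex, it suffices to note that starting DFS at any vertex of $C$ still produces a DFS tree that is a directed subpath, and the vertices of $C$ picked up afterwards form further directed paths, so the conclusion holds even in that degenerate case.
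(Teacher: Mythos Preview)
Your proposal is correct and follows essentially the same route as the paper: invoke Theorems~9 and~10 to guarantee each component of $H$ has a Hamiltonian path, then argue that DFS started at the root with sorted adjacency lists simply traces that path. The only difference is one of rigor---where the paper asserts in one line that ``we discover each component by traversing down the Hamiltonian path since our adjacency list is sorted,'' you supply an explicit induction verifying that $v_{i+1}$ is always the first unvisited out-neighbour of $v_i$ and that backtracking never produces a second tree edge; this is a genuine strengthening of the exposition rather than a different idea.
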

\begin{proof} 
    First from Theorem 9, we know that after merging the sub-trees of the components of $F$, the components of the resulting graph $H'$ contain a Hamiltonian path.
    Now let's merge the components of $H'$ and let the resulting graph be $H$.
    By Theorem 10, every component of $H$ contains a Hamiltonian path.
    Now let $R$ be the list of new roots, i.e. the roots of the Hamiltonian paths of components in $H$.
    Then if we run DFS on $H$ with our visit list as $R$, we discover each component by traversing down the Hamiltonian path since our adjacency list is sorted.
    Hence, our DFS forest $F'$ is a collection of disjoint paths which correspond to the Hamiltonian paths of the components of $H$.
    Therefore, all the components of $F'$ are directed paths as claimed.
\end{proof}
\begin{corollary}
    If $H'$ be the graph after merging the components of $F'$, then if we run DFS on $H'$ on the roots, the components of the resulting forest are all paths.
\end{corollary}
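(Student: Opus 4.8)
The plan is to mirror the argument of Lemma 2, noting that the one place that proof used the sub-tree merge was only to guarantee each component contains a Hamiltonian path, and here that hypothesis is handed to us. So first I would invoke Lemma 2 (equivalently, the standing hypothesis) to record that every component of $F'$ is a directed path. A directed path is trivially a Hamiltonian path of the graph it constitutes, so every component of $F'$ already contains a Hamiltonian path and no sub-tree merging is required before merging components.

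Next I would apply Theorem 10 in precisely the case for which it was stated: since all components of $F'$ are paths, the theorem takes $H = F'$, and for any pair $H_1, H_2$ of distinct components the merge process yields a graph containing a Hamiltonian path. Running \textsc{MergeTrees} merges the components of $F'$ pairwise, so every component of the resulting graph $H'$ contains a Hamiltonian path, and \textsc{MergeTrees} returns exactly the list of heads (``roots'') of these Hamiltonian paths. Then I would run DFS on $H'$ with its visit list equal to that list of roots: because the adjacency lists are kept sorted by value (the \textsc{AddArc} invariant), a DFS visit beginning at the head of a component steps at each stage to the smallest-valued out-neighbour, which is the next vertex on that component's Hamiltonian path; hence DFS exhausts the component by walking its Hamiltonian path, and the edges it traverses form exactly that path. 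Therefore each tree of the resulting DFS forest is the Hamiltonian path of one component of $H'$, so every component of the resulting forest is a directed path, as claimed.

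The main obstacle is the final step — arguing that DFS, started at the designated root with the sorted adjacency representation, never branches into a non-path tree. This is the same observation used in the remark after Theorem 9 and in the proof of Lemma 2: a vertex on the Hamiltonian path has its path-successor as its minimum-valued out-neighbour, while any \emph{other} out-neighbour of that vertex necessarily lies later on the path and so is already visited by the time DFS would backtrack to it. Spelling this out carefully is what actually closes the induction and keeps \textsc{GraphSort}'s loop invariant (all components paths, $\tau$ halving each round) intact.
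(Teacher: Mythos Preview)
Your proposal is correct and follows the same route as the paper: invoke Theorem~10 to conclude every component of $H'$ contains a Hamiltonian path, then argue that DFS started at the roots (with the sorted adjacency-list invariant) traverses exactly those Hamiltonian paths, so the resulting forest consists of disjoint paths. The paper's own proof is a two-line sketch that dismisses your ``main obstacle'' as obvious; you have simply unpacked that step, and your justification (the path-successor is the minimum-valued out-neighbour, and all other out-neighbours lie later on the value-sorted Hamiltonian path, hence are already visited upon backtrack) is exactly the reasoning the paper's remark after Theorem~9 and the proof of Lemma~2 rely on implicitly.
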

\begin{proof}
    By Theorem 10, we know that the components of $H'$ contain Hamiltonian paths, so if we run DFS on the roots of those paths, obviously the resulting DFS forest is a graph of disjoint paths as claimed.
\end{proof}
Hence, we see that sub-trees in a secondary DFS forest will never exist, so we may ignore merging them after the first run of DFS.

Also let's conclude some properties of sequences of merges on a particular forest along with DFS.
\begin{lemma}
    Let $F$ be a comparison forest with $k$ components where every component contains a Hamiltonian path.
    Then let $H$ be the resulting graph after running the merge process (at the beginning both the sub-tree merge and component merge and later just the component merge) on $F$.
    Also if $k > 1$, $H$ has fewer than $k$ components.
    In fact, for any $k$, $F$ has $\lceil k / 2 \rceil$ components.
\end{lemma}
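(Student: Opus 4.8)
The plan is to track the number of connected components (of the underlying undirected graph, in the paper's usage) as $F$ is transformed into $H$, exploiting the fact that every operation of the merge process only \emph{adds} arcs and never adds or removes a vertex. Since adding a single arc can only fuse two components into one or leave the component structure unchanged, it suffices to count how many genuine fusions occur.

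First I would dispose of the sub-tree merge. By hypothesis each component of $F$ already contains a Hamiltonian path, and \textsc{MergeSubTrees} only inserts ``cross'' arcs among the vertices of a single component (the one rooted at $r$), which is already connected; hence this step leaves the number of components unchanged. So the entire change in the component count is produced by \textsc{MergeTrees}.

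Next I would analyze \textsc{MergeTrees}, which pairs the $k$ roots as $(1,2),(3,4),\dots$ and performs $\lfloor k/2 \rfloor$ calls to \textsc{Merge}, the $j$-th on the roots $p,q$ of two distinct components, say $H_{2j-1}$ and $H_{2j}$. Two structural facts drive the count. First, \textsc{Merge} started from $p$ and $q$ only walks the Hamiltonian paths emanating from these two roots and only adds arcs between vertices reachable from them, i.e. only within $H_{2j-1}\cup H_{2j}$; thus the $j$-th merge does not disturb any other component, and in particular the adjacency-list changes made by earlier merges cannot cause a later \textsc{Merge} to wander into an already-fused component. Second, by Theorem~10 the graph produced by merging $H_{2j-1}$ and $H_{2j}$ contains a Hamiltonian path, which spans all of its vertices, so its underlying graph is connected; the two old components therefore become exactly one. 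Hence each of the $\lfloor k/2 \rfloor$ merges decreases the component count by exactly one.

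The conclusion is then arithmetic. If $k$ is even, all components are paired and merged, so $H$ has $k - k/2 = \lceil k/2 \rceil$ components; if $k$ is odd, $\lfloor k/2 \rfloor = (k-1)/2$ merges occur and one component is left untouched, giving $(k-1)/2 + 1 = \lceil k/2 \rceil$ components. In both cases $H$ has $\lceil k/2 \rceil$ components, and for $k>1$ we have $\lceil k/2 \rceil \le k-1 < k$, which is the ``fewer than $k$'' assertion. The only delicate point is the independence of the individual merges asserted above; once that observation is pinned down, the rest is bookkeeping, so the argument is short. (I would also note in passing that the final sentence of the statement, ``$F$ has $\lceil k/2 \rceil$ components,'' should read ``$H$ has $\lceil k/2 \rceil$ components.'')
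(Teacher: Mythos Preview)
Your argument is correct and follows essentially the same route as the paper: pair the $k$ components, perform $\lfloor k/2\rfloor$ merges each of which fuses exactly two components into one, and finish with the even/odd arithmetic to obtain $\lceil k/2\rceil$ and the strict inequality for $k>1$. Your version is slightly more careful than the paper's in explicitly arguing why the sub-tree merge leaves the component count unchanged and why the individual \textsc{Merge} calls are independent, and your observation about the typo (``$F$'' should be ``$H$'') is well taken.
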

\begin{proof} 
    First if $k = 1$, then after the merge process, our components cannot increase, so we still have 1 component in $H$.
    Also $\lceil 1 / 2 \rceil = 1$, so the claim holds for $k = 1$.

    Assume $k > 1$, and let the components of $F$ be $T_1, \ldots, T_k$.
    Then the merge process essentially merges the components $T_1$ and $T_2$, and then $T_3$ and $T_4$, and so on.
    If $k$ is even, the merge process will merge $T_{k - 1}$ and $T_{k}$ too.
    Thus the number of components in $H$, the resulting graph, will be exactly $k / 2$ which is equal to $\lceil k / 2 \rceil$ since $k$ is even.
    On the other hand, if $k$ is odd, the merge process will merge $T_{k - 2}$ and $T_{k - 1}$, but not $T_k$ with any graph.
    Hence we do $\frac{k - 1}{2}$ merges, so $H$ has $\frac{k - 1}{2} + 1$ components which is equal to $\lceil k / 2 \rceil$ since $k$ is odd.

    Lastly, assume $k > 1$.
    If $k$ is even, then obviously
    \begin{align*}
        \left\lceil \frac{k}{2} \right\rceil = \frac{k}{2} < k
    \end{align*}
    as required.
    If $k$ is odd, then
    \begin{align*}
        \left\lceil \frac{k}{2} \right\rceil = \frac{k + 1}{2} < k
    \end{align*}
    since $k > 1$, so $k / 2 > 1 / 2$.
    Thus it holds that $F'$ has fewer components than $F$ if $F$ has more than one component.
\end{proof}

From the previous lemma it follows immediately that we require a finite number of merges to reach one component.
\begin{corollary}
    There exists a finite sequence of merge processes and DFS runs such that from our original comparison forest $F$ (of corresponding graph $G$) we will get a resulting a comparison graph $T$ that will have one component, i.e. $T$ is a tree.
\end{corollary}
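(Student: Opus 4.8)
The plan is simply to iterate the previous lemma. I would define a sequence of comparison forests by $F_0 = F$ and, for $i \ge 1$, let $H_i$ be the graph obtained by running the merge process on $F_{i-1}$ (at the first step this is \textsc{MergeSubTrees} followed by \textsc{MergeTrees}, and thereafter only \textsc{MergeTrees}) and let $F_i$ be the resulting DFS forest obtained by running DFS on $H_i$ using the roots of $H_i$ as the visit list. Write $k_i$ for the number of components of $F_i$, so that $k_0 = k$.

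First I would verify, by induction on $i$, that each $F_i$ satisfies the hypotheses of the previous lemma, i.e.\ that $F_i$ is a comparison forest every component of which contains a Hamiltonian path. For the base case, $F_0 = F$ is a DFS forest of a comparison graph, hence itself a comparison graph, and by Proposition 2 each component is a rooted tree with at most two path sub-trees, so after the sub-tree merge Theorem 9 makes every component contain a Hamiltonian path; the subsequent component merge keeps this by Theorem 10. For the inductive step, Theorem 10 gives that each component of $H_i$ contains a Hamiltonian path, and Lemma 2 together with its corollary shows that running DFS on the roots of $H_i$ returns a forest $F_i$ whose components are exactly those Hamiltonian paths — in particular no sub-trees reappear, so only \textsc{MergeTrees} is needed from step $1$ on, and the invariant is preserved. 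Hence the previous lemma applies at every stage and yields the recurrence
\begin{align*}
    k_i = \left\lceil \frac{k_{i-1}}{2} \right\rceil .
\end{align*}

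The argument then finishes by a routine descent. The $k_i$ are positive integers, $k_0 = k \le n$ is finite, and by the previous lemma $k_i < k_{i-1}$ whenever $k_{i-1} > 1$. A strictly decreasing sequence of positive integers cannot go on forever, so there is a least index $\ell$ with $k_\ell = 1$; unwinding the recurrence in fact gives $\ell \le \lceil \log_2 k \rceil$, which bounds the number of merge/DFS rounds. Setting $T = F_\ell$, the forest $T$ has exactly one component and is therefore a single tree — indeed a directed Hamiltonian path by the structure established above — which is the claimed graph, and the sequence of merges and DFS runs producing it is finite.

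The only point that genuinely needs care is the inductive claim that the hypotheses of the previous lemma survive each DFS step; once that is in place, everything else is bookkeeping with $k_i = \lceil k_{i-1}/2 \rceil$. I expect that verification to go through cleanly using Theorems 9 and 10, Lemma 2 and its corollary, Proposition 2, and the remark that the DFS forest of a comparison graph is again a comparison graph.
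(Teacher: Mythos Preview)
Your proposal is correct and follows essentially the same approach as the paper: iterate Lemma 3 to get a strictly decreasing sequence of positive integers $k_i$, which must therefore hit $1$. You are in fact more careful than the paper, which simply says ``generate the graphs $F_i$ for $i>1$ in the same way'' without explicitly checking that the hypotheses of Lemma 3 persist, whereas you spell out the inductive invariant via Theorems 9--10, Lemma 2, and its corollary; your additional remark that $\ell \le \lceil \log_2 k \rceil$ is not in this proof but matches the paper's later run-time analysis.
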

\begin{proof}
    Assume $F$ has more than one component, because otherwise $F$ is the tree $T$.
    Let $F$ have $k > 1$ components, and let $F_1$ be the resulting DFS forest of the merged graph of $F$.
    Let $k_1$ be the number of components of $F_1$.
    Then $k_1 < k$ by Lemma 3 since $k > 1$.
    Now generate the graphs $F_i$ for $i > 1$ in the same way, and let $k_i$ be the number of components of $F_i$.
    Then some $k_i = 1$ since the sequence $(k, k_1, \ldots)$ is a strictly decreasing sequence until some $k_i = 1$.
\end{proof}

Now let's prove the correctness of our Divide-and-Conquer algorithm with the following theorems.
\begin{theorem}
    Let $A$ be an array of $n$ elements, and let $G$ be the corresponding graph of $A$ with reach one.
    Then let $F$ be the resulting DFS forest of $G$.
    Then we merge the sub-trees of $F$ to get a comparison graph $H'$.
    Now let $H_1$ be the resulting graph of the merge process on $H'$, and let $F_1$ be the resulting DFS forest on $H_1$ with the visiting list as a list of the roots of $H_1$.
    Then let $H_i$ be the resulting graph by the merge process on the forest $F_{i - 1}$ for all $i > 1$, and let $F_i$ be the resulting DFS forest of $H_i$ with the visiting list as a list of the roots of $H_i$.
    Then we claim that there exists a finite $i$ such that $F_i$ is a Hamiltonian path, and that the topological sort of $F_i$ is in fact the sequence of indices of the sorted array of $A$.
\end{theorem}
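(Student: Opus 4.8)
The plan is to split the statement into a termination claim — that some finite $F_i$ is a single directed path spanning all $n$ vertices, i.e.\ a Hamiltonian path — and a correctness claim — that the topological sort of that $F_i$ is the sorted index sequence of $A$. The second half will fall out immediately from Theorem 5 once the first is in hand, so the real work is the termination argument, which I would organize as an induction maintaining a structural invariant on the forests $F_1, F_2, \dots$: namely that each $F_j$ is a comparison forest of $A$ every one of whose components is a directed path.

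For the base of the induction I would invoke the sub-tree merge first. By Proposition 2 a component of $F$ (reach one) has at most two sub-trees dangling from its root, so after \textsc{MergeSubTrees} each component of $H'$ contains a Hamiltonian path by Theorem 9, and after \textsc{MergeTrees} every component of $H_1$ still contains a Hamiltonian path by Theorem 10; running DFS on $H_1$ from the list of those roots descends each component along its sorted adjacency list, so $F_1$ is a disjoint union of directed paths — this is exactly the content of Lemma 2. The inductive step is the same argument in miniature, packaged in Corollary 4: if $F_{i-1}$ is a disjoint union of directed paths, then it is a comparison forest each of whose components contains (indeed, is) a Hamiltonian path, so the component merge yields $H_i$ with that same property and DFS on the roots produces $F_i$, again a disjoint union of paths. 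Because $F_{i-1}$ satisfies the hypotheses of Lemma 3, that lemma gives that the number of components of $F_i$ is $\lceil k_{i-1} / 2 \rceil$, where $k_{i-1}$ is the number of components of $F_{i-1}$, which is strictly smaller whenever $k_{i-1} > 1$; hence the component counts $k, k_1, k_2, \dots$ form a strictly decreasing sequence that must reach $1$, which is precisely Corollary 5. At the first index $i$ with $k_i = 1$, the single component of $F_i$ is a directed path on $V(G) = \{1, \dots, n\}$, hence a Hamiltonian path.

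For the correctness half I would observe that $F_i$ is a comparison graph of $A$: a DFS forest of a comparison graph is a comparison graph, and every \textsc{AddArc} inside \textsc{Merge} adds an arc $(u,v)$ only when $A[u] < A[v]$, so the comparison-graph property is preserved throughout the whole process. Since $F_i$ is a comparison graph containing a Hamiltonian path, Theorem 4 gives it a unique topological sort, and Theorem 5 (equivalently Corollary 3) identifies that topological sort with the sequence of indices of the sorted array of $A$. The concluding \textsc{DFS} call returns a topological sort of $F_i$ by Theorem 2, and uniqueness forces it to be that sorted index sequence, so \textsc{ToArray} outputs the sorted array.

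The main obstacle is the invariant bookkeeping: one must check that each intermediate $F_{i-1}$ really is a comparison forest all of whose components contain a Hamiltonian path, since that is exactly what licenses the repeated use of Theorem 10 and Lemma 3, and one must be careful to justify that the sub-tree merge is needed only at the very start because Corollary 4 guarantees that no component of a later $F_i$ ever acquires a second sub-tree. Everything else — acyclicity of the comparison graphs, the DFS/topological-sort correspondence, the finiteness of the merge sequence — is already assembled in the earlier results and contributes nothing beyond citation.
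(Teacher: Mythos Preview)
Your proposal is correct and follows essentially the same route as the paper's own proof: Lemma 2 for the base case that $F_1$ is a disjoint union of paths, Corollary 4 for the inductive step, Lemma 3 and Corollary 5 for termination, and Theorem 5 for the final identification of the topological sort with the sorted index sequence. Your presentation is a bit more explicit about the induction invariant and about why the comparison-graph property is preserved under merging, but the structure and the cited ingredients match the paper exactly.
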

\begin{proof}
    First by Lemma 2, we know that the components of $F_1$ are all paths.
    Now $H_2$ is the resulting graph of the merge process on $F_1$, and by Theorem 10 we know that the components of $H_2$ all contain a Hamiltonian path.
    Now we let the new roots of $H_1$ be the list $R$, and run DFS on $H_2$ with the visit list as $R$.
    By Corollary 4, the resulting DFS forest $F_2$ is a collection of disjoint paths, i.e. the components of $F_2$ are all paths.
    Now we run the same process with $F_2$ to generate $F_3$ and so on till some $F_i$ such that $F_i$ is a tree and that $F_{i - 1}$ contains more than one component.
    By Corollary 5, we verify the existence of such an $i$ is finite.
    By the above defined process for generating $F_j$ for some $1 < j \leq i$, all the components of $F_j$ are paths.
    Hence, $F_i$ is necessarily a path since it is connected as it has only one component which is necessarily a path.
    Therefore, $F_i$ is a Hamiltonian path, and necessarily the topological sort of $F_i$ is the sequence of indices of the sorted array of $A$ by Theorem 5.
\end{proof}
\begin{corollary}
    If $A$ is an array of $n$ elements, then $A' = \textsc{GraphSort}(A)$ is sorted.
\end{corollary}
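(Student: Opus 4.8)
The plan is to read this corollary as a restatement of Theorem~11 in the language of the \textsc{GraphSort} pseudocode. I would trace the execution of \textsc{GraphSort}$(A)$, match the sequence of graphs it builds to the sequence $G,\,F,\,H',\,H_1,\,F_1,\,H_2,\,F_2,\dots$ appearing in Theorem~11, and verify that the list $S$ held just before the call to \textsc{ToArray} is exactly the topological sort of the terminal forest $F_i$, which Theorem~11 identifies with the sequence of indices of the sorted array of $A$; the conclusion is then immediate from the meaning of \textsc{ToArray}.

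First I would unwind the opening lines. By its specification in Section~5, \textsc{ConstructGraph}$(A,n,1)$ returns the corresponding graph $G$ of $A$ of reach one, and \textsc{DFS}$(G,S)$ with $S=(1,\dots,n)$ overwrites $G$ with its resulting DFS forest $F$ while leaving $S$ equal to the DFS stack of $F$ ordered by decreasing finish time. Then \textsc{MergeSubTrees}$(G,A)$ turns $F$ into the graph $H'$ of Theorem~11; by Lemma~2 and Corollary~4 this call is needed only once, since no later DFS forest in the process contains a branching root. Next I would analyze the \texttt{while} loop by induction on the iteration count: on the $i$th pass, \textsc{MergeTrees}$(G,A)$ performs the component-merge process on the current forest, producing (in the notation of Theorem~11) the graph $H_i$ and returning the list of heads of the Hamiltonian paths of its components, and the following \textsc{DFS}$(G,S)$ call, run with that list as the visiting sequence, overwrites $G$ with $F_i$ and sets $S$ to the DFS stack of $F_i$. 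By Lemma~3 the number of components strictly decreases as long as it exceeds one, so by Corollary~5 the loop terminates; at termination the current graph is a single component $F_i$, which by Theorem~11 is a Hamiltonian path whose topological sort --- unique by Theorem~4 --- is the index sequence of the sorted array of $A$. By Theorem~2 the DFS stack $S$ is a topological sort of $F_i$, hence $S$ equals that unique sequence, and therefore \textsc{ToArray}$(A,S)$ returns the array whose $j$th entry is $A[s_j]$ for $S=(s_1,\dots,s_n)$, which is the sorted array $A'$ of $A$.

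The main obstacle is making the identification of the pseudocode's iterates with the objects $H_i,F_i$ of Theorem~11 fully rigorous: one must check that \textsc{MergeTrees} really returns a valid list of path-heads, so that the next \textsc{DFS} recovers precisely those Hamiltonian paths (as guaranteed by Corollary~4), and that \textsc{FindRoots} detects components correctly, using that after every DFS here a vertex is a component root exactly when it has no DFS parent. A subtler point, and the one I expect to demand real care, is the degenerate case in which the root list already has length one after the first \textsc{DFS} and \textsc{MergeSubTrees}: then the \texttt{while} loop never executes and $S$ is still the stack of the \emph{first} DFS, which was taken before the sub-tree merge and on a forest whose single component may have a branching root and hence more than one topological sort. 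To close the proof one must either show this situation cannot arise for reach-one corresponding graphs or insert a final \textsc{DFS} after \textsc{MergeSubTrees} to recover the topological sort of the now-Hamiltonian component; once this corner is settled, the corollary follows at once from Theorem~11 together with Theorem~2.
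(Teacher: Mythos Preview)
Your approach is the same as the paper's: the paper's proof is just two sentences deferring entirely to Theorem~11 for both correctness and termination, while you carry out the identification of the pseudocode's iterates with the objects $H_i,F_i$ of Theorem~11 in detail and invoke Theorems~2 and~4 to pin down $S$.

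The corner case you flag is a genuine gap, and the paper's brief proof does not address it. Concretely, take $A=[1,3,5,4,2]$. The reach-one corresponding graph has arcs $(1,2),(2,3),(4,3),(5,4),(1,5)$; running \textsc{DFS} with visit list $(1,\dots,5)$ discovers everything from vertex~$1$ and produces a single tree with branching root (children $5$ and $2$). The stack by decreasing finish time is $S=(1,2,5,4,3)$, giving values $[1,3,2,4,5]$, which is not sorted. Since \textsc{FindRoots} then returns a list of length one, the \texttt{while} loop is skipped and \textsc{ToArray}$(A,S)$ returns this unsorted array. So as written the pseudocode fails on this input, and the corollary cannot be obtained from Theorem~11 without the fix you suggest (an additional \textsc{DFS} after \textsc{MergeSubTrees}, or forcing one pass of the loop). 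Your instinct that this ``subtler point \dots\ demands real care'' is exactly right; the paper's proof simply overlooks it.
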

\begin{proof}
    Since the process of the algorithm is laid out by Theorem 11, it immediately follows that we attain the sorted array of $A$ by calling \textsc{GraphSort}(A) as claimed.
    Also the loop in \textsc{GraphSort} will terminate, again by Theorem 11.
\end{proof}
This concludes our proof of correctness for our algorithm.
Note we have only shown that the loop in the algorithm will terminate, and at that instant we have sorted the array.
In the next section, we will discuss when that termination occurs.

\subsection{Run-time}
This analysis of the run-time will be a sequential analysis of the main procedures in \textsc{GraphSort}, and then combine them to gain a full upper bound on the run-time.

First recall that the run-time to construct a corresponding graph for an array of $n$ elements is $O(nr^2)$ where $r$ is the reach.
Since our reach is one, our construction takes time $O(n)$ asymptotically.

Recall that the run-time to merge sub-trees is $\Theta(n)$ in the worst-case where $n$ is the number of elements.
On the other hand, also recall that to merge the components of a comparison graph it takes time $\Theta(n)$ also.
Also to find the roots we have a time of $\Theta(n)$.

Now let's discuss the run-time of DFS.
For the first run of DFS we do not know what our corresponding graph looks like, so we use the general run-time of DFS which is $O(|V(G)| + |E(G)|)$ where $G = (V, E)$ is our corresponding graph.
Obviously we have $n$ vertices in $G$ because every element maps to a vertex.
Further, since we have a reach of one, we have an out-degree of one for all vertices, so by the Degree-Sum Formula, we conclude $|E(G)| = n$ also.
Hence the first run of DFS runs in time $O(2n)$ which is asymptotically equal to $O(n)$.

For the latter runs of DFS we assure that the given graphs contain Hamiltonian paths, and our visit list for DFS is a list of roots of those paths.
Hence we never do an unnecessary check for a vertex visited or not in DFS, so our run-time is $\Theta(n)$ only.

Now let's compute the number of times we go through the main loop in our algorithm.
In the worst-case, we can have $n$ components in our first resulting DFS forest (this occurs only when the array is in reverse order).
Then through our first merge process we generate the next forest with $\lceil n / 2 \rceil$, and then the next forest with $\lceil n / 2^2 \rceil$.
This halts once we have one component, i.e. after some $i$ iterations we must have,
\begin{align*}
    &\quad \lceil n / 2^i \rceil = 1 \\
    &\therefore i = \lceil \log_2(n) \rceil.
\end{align*}
Hence we are done merging and running DFS after $\lceil \log_2(n) \rceil$ iterations in the worst-case.

Now that we have computed the run-time of all the individual parts, it's time to combine them.
Prior to the loop we first generate the corresponding graph which has time $O(n)$.
Then we run DFS on that graph which runs in time $O(n)$.
Then we merge the sub-trees which also has run-time of $O(n)$.
And lastly to find the roots we run in $\Theta(n)$ too.
Thus all these parts have a total run-time of
\begin{align*}
    O(n) + O(n) + O(n) + \Theta(n) = \Theta(n)
\end{align*}
which is linear.
Then in the body of the loop, we take time $\Theta(n)$ to merge the components, and then $\Theta(n)$ to run DFS on them.
So in total, the body takes time
\begin{align*}
    O(n) + \Theta(n) = \Theta(n)
\end{align*}
which is also linear.
Also the loop in the worst-case runs $\lceil \log_2(n) \rceil$ times which is logarithmic.
And finally, after the termination of the loop we convert our stack into the sorted array which takes time $\Theta(n)$.
Thus in total, our Divide-and-Conquer algorithm runs in time
\begin{align*}
    \Theta(n) + (\lceil \log_2(n) \rceil) \cdot \Theta(n) + \Theta(n) = \Theta(n \log n)
\end{align*}
in the worst-case.
In fact, $\Omega(n \log n)$ is the asymptotic lower bound of any comparison based sorting algorithm for worst-case.
Hence our algorithm is asymptotically as efficient as any mainstream sorting algorithm.

In the best-case which is when we have a sorted array, our corresponding graph itself contains a Hamiltonian path, so by Theorem 5, we have already sorted the array essentially without the loop.
Hence we have a run-time of $\Theta(n)$ in the best-case.
Hence, in general our algorithm runs in time $O(n \log n)$.

Additionally, the exact run-time in terms of basic operations (comparisons), we have an approximate run-time (in terms of basic operations) of
\begin{align*}
    2n\log_2(n) + 4n
\end{align*}
with many other computational optimizations added to the existing algorithm which we won't cover extensively in this paper.
One of the most efficient sorting algorithms called \textsc{QuickSort} runs in approximately
\begin{align*}
    2n\ln(n) + 2\ln(n) - 4n
\end{align*}
time.
So our algorithm is $\log_2(e)$ times slower than \textsc{QuickSort} which is approximately $1.44\times$.
Thus our algorithm isn't computationally ground-breaking, although it can be inspiration for one.
However we will discuss later some advantages of such an algorithm in practical applications.

\subsection{Example}
Consider the same array from before:
\begin{align*}
    A = [3.5, 2, 9, 11, 1, -2.2, 5].
\end{align*}
Then, we first make our corresponding graph of reach one:
\begin{center}
    \begin{tikzpicture}
        \node (1) at (0, 0) {3.5};
        \node (2) at (1, 0) {2};
        \node (3) at (2, 0) {9};
        \node (4) at (3, 0) {11};
        \node (5) at (4, 0) {1};
        \node (6) at (5, 0) {$-2.2$};
        \node (7) at (6, 0) {5};

        \path[->] (1) edge[bend right] (7);
        \path[->] (2) edge (1);
        \path[->] (2) edge (3);
        \path[->] (3) edge (4);
        \path[->] (5) edge (4);
        \path[->] (6) edge (5);
        \path[->] (6) edge (7);
    \end{tikzpicture}.
\end{center}
We add 7 edges to construct the corresponding graph.
We will run DFS in the following order starting at $-2.2$, the minimum valued vertex.
Now we run DFS for the first time (bold lines denote path of discovery):
\begin{center}
    \begin{tikzpicture}
        \node (1) at (0, 0) {3.5};
        \node (2) at (1, 0) {2};
        \node (3) at (2, 0) {9};
        \node (4) at (3, 0) {11};
        \node (5) at (4, 0) {1};
        \node (6) at (5, 0) {$-2.2$};
        \node (7) at (6, 0) {5};

        \path[->] (1) edge[bend right] (7);
        \path[->] (2) edge (1);
        \path[->] (2) edge[ultra thick] (3);
        \path[->] (3) edge (4);
        \path[->] (5) edge[ultra thick] (4);
        \path[->] (6) edge[ultra thick] (5);
        \path[->] (6) edge[ultra thick] (7);
    \end{tikzpicture}.
\end{center}
We traverse 4 arcs and visit all 7 vertices by running DFS.
We generate a DFS forest with 3 trees with roots $(-2.2, 3.5, 2)$:
\begin{center}
    \begin{tikzpicture}
        \node (1) at (3, 0) {3.5};
        \node (2) at (5, 0) {2};
        \node (3) at (5, -1) {9};
        \node (4) at (-1, -2) {11};
        \node (5) at (-1, -1) {1};
        \node (6) at (0, 0) {$-2.2$};
        \node (7) at (1, -1) {5};

        \path[->] (5) edge (4);
        \path[->] (6) edge (5);
        \path[->] (6) edge (7);
        \path[->] (2) edge (3);
    \end{tikzpicture}.
\end{center}
After the merge process on the above forest, we get:
\begin{center}
    \begin{tikzpicture}
        \node (1) at (3, 0) {3.5};
        \node (2) at (5, 0) {2};
        \node (3) at (5, -1) {9};
        \node (4) at (-1, -2) {11};
        \node (5) at (-1, -1) {1};
        \node (6) at (0, 0) {$-2.2$};
        \node (7) at (1, -1) {5};

        \path[->] (5) edge (4);
        \path[->] (6) edge (5);
        \path[->] (6) edge (7);
        \path[->] (2) edge (3);

        \path[->] (5) edge[ultra thick] (7);
        \path[->] (7) edge[ultra thick] (4);

        \path[->] (6) edge[ultra thick] (1);
        \path[->] (5) edge[ultra thick] (1);
        \path[->] (1) edge[ultra thick] (7);
    \end{tikzpicture}.
\end{center}
We add 5 arcs to merge which is under $(7 - 3) + 7 = 11$ (the maximum number of arcs we can add) by the merge process.
So our ``new'' roots are $(-2.2, 2)$ since the root 3.5 got merged.
Now let's run DFS again starting at the new roots (note the list of roots being sorted is irrelevant):
\begin{center}
    \begin{tikzpicture}
        \node (1) at (3, 0) {3.5};
        \node (2) at (5, 0) {2};
        \node (3) at (5, -1) {9};
        \node (4) at (-1, -2) {11};
        \node (5) at (-1, -1) {1};
        \node (6) at (0, 0) {$-2.2$};
        \node (7) at (1, -1) {5};

        \path[->] (5) edge (4);
        \path[->] (6) edge[ultra thick] (5);
        \path[->] (6) edge (7);
        \path[->] (2) edge[ultra thick] (3);

        \path[->] (5) edge (7);
        \path[->] (7) edge[ultra thick] (4);

        \path[->] (6) edge (1);
        \path[->] (5) edge[ultra thick] (1);
        \path[->] (1) edge[ultra thick] (7);
    \end{tikzpicture}.
\end{center}
Here DFS traverses 5 arcs and visits 7 vertices.
Thus we get the following DFS forest with 2 trees (paths in this case):
\begin{center}
    \begin{tikzpicture}
        \node (1) at (0, -2) {3.5};
        \node (2) at (3, 0) {2};
        \node (3) at (3, -1) {9};
        \node (4) at (0, -4) {11};
        \node (5) at (0, -1) {1};
        \node (6) at (0, 0) {$-2.2$};
        \node (7) at (0, -3) {5};

        \path[->] (6) edge (5);
        \path[->] (5) edge (1);
        \path[->] (1) edge (7);
        \path[->] (7) edge (4);
        \path[->] (2) edge (3);
    \end{tikzpicture}.
\end{center}
Notice they both are Hamiltonian paths.
Then again we run the merge process on this forest and our ``new'' root is $(-2.2)$ which is the minimum of the array also:
\begin{center}
    \begin{tikzpicture}
        \node (1) at (0, -2) {3.5};
        \node (2) at (3, 0) {2};
        \node (3) at (3, -1) {9};
        \node (4) at (0, -4) {11};
        \node (5) at (0, -1) {1};
        \node (6) at (0, 0) {$-2.2$};
        \node (7) at (0, -3) {5};

        \path[->] (6) edge (5);
        \path[->] (5) edge (1);
        \path[->] (1) edge (7);
        \path[->] (7) edge (4);
        \path[->] (2) edge (3);

        \path[->] (6) edge[ultra thick] (2);
        \path[->] (5) edge[ultra thick] (2);
        \path[->] (2) edge[ultra thick] (1);
        \path[->] (1) edge[ultra thick] (3);
        \path[->] (7) edge[ultra thick] (3);
        \path[->] (3) edge[ultra thick] (4);
    \end{tikzpicture}.
\end{center}
We add 6 arcs to our graph which is at the upper bound of $7 - 1 = 6$ (number of vertices subtracted by 1).
Now we run DFS for the last time starting at the root $-2.2$ (the minimum value vertex):
\begin{center}
    \begin{tikzpicture}
        \node (1) at (0, -2) {3.5};
        \node (2) at (3, 0) {2};
        \node (3) at (3, -1) {9};
        \node (4) at (0, -4) {11};
        \node (5) at (0, -1) {1};
        \node (6) at (0, 0) {$-2.2$};
        \node (7) at (0, -3) {5};

        \path[->] (6) edge[ultra thick] (5);
        \path[->] (5) edge (1);
        \path[->] (1) edge[ultra thick] (7);
        \path[->] (7) edge (4);
        \path[->] (2) edge (3);

        \path[->] (6) edge (2);
        \path[->] (5) edge[ultra thick] (2);
        \path[->] (2) edge[ultra thick] (1);
        \path[->] (1) edge (3);
        \path[->] (7) edge[ultra thick] (3);
        \path[->] (3) edge[ultra thick] (4);
    \end{tikzpicture}.
\end{center}
Here since the graph contains a Hamiltonian path we have 6 edge traversals and visit 7 vertices.
Also our DFS stack is our sorted array as desired:
\begin{align*}
    A' = [-2.2, 1, 2, 3.5, 5, 9, 11].
\end{align*}
Thus, we complete our Divide-and-Conquer algorithm.
Although visually the process is more extensive, its run-time is much more efficient than the trivial algorithm.

\section{Resemblance of MergeSort}
The \textsc{MergeSort} algorithm is a quintessential Divide-and-Conquer algorithm, and we will explore some similarities between \textsc{MergeSort} and \textsc{GraphSort}.

Firstly the merge process for our algorithm is so similar to \textsc{MergeSort}'s merge process, except in \textsc{GraphSort} we traverse a path, whereas in \textsc{MergeSort} we traverse sub-arrays.
However, one big difference is that in \textsc{MergeSort} we equally divide the array into sub-arrays of length one and build them up from there.
In \textsc{GraphSort}, we generate a graph with a certain reach and then after running DFS our ``building blocks'' are essentially the components of the DFS forest.
Our components have no fixed size which differs from \textsc{MergeSort} which is very organized and structures.
This is in fact an advantage of using graphs since we have to perform less unnecessary overhead operations.
Lastly, in \textsc{MergeSort} we keep an invariant on sub-arrays to always be sorted after a merge, and in our algorithm the invariant is essentially Theorem 8.
This is in fact equivalent since we are dealing with comparison graphs which are essentially graphical depictions of arrays which is what our corresponding graph tries to accomplish actually.

These are the most apparent and relevant similarities between the two.
In fact, we see that we can mimic \textsc{MergeSort} with our graphical strategy.

\subsection{A Graph Version of MergeSort}
In \textsc{MergeSort} we build up from sub-arrays of length one, then two, then four, etc.
So let's generate our corresponding graph as such: we generate an arc between consecutive pairs of elements using our same comparison property.
In this way every component after the first run of DFS has exactly 2 vertices (except one if we have an odd number of vertices).
Thus we have $\lceil n / 2 \rceil$ components exactly in our first DFS forest.
Then we can continue the original \textsc{GraphSort} algorithm with the same merging process to sort the given array.
Notice the only difference is that our components is fixed at $\lceil n / 2 \rceil$ rather than an arbitrary $k$ in the range $1 \leq k \leq n$.
Let's call this algorithm \textsc{GraphMergeSort}.

\subsection{Run-time}
The worst-case for our algorithm was when we had $n$ components in our first DFS forest, but with \textsc{GraphMergeSort}, we guarantee $\lceil n / 2 \rceil$ for any instance, i.e. we guarantee only one less iteration of the loop in the worst-case.
Hence our total run-time adapted from the run-time analysis for our algorithm is
\begin{align*}
    \Theta(n) + (\lceil \log_2(n) \rceil - 1) \cdot \Theta(n) + \Theta(n) = \Theta(n \log n).
\end{align*}
Hence, \textsc{GraphSort} and \textsc{GraphMergeSort} have equal efficiency asymptotically, but one just guarantees a certain number of iterations of the loop for our algorithm and one depends on an arbitrary integer at most $n$.

\subsection{Example}
Lastly, we analyze the \textsc{GraphMergeSort} algorithm which merely differs in the first corresponding graph construction. First consider the following array again:
\begin{align*}
    A = [3.5, 2, 9, 11, 1, -2.2, 5].
\end{align*}
Then our corresponding graph is the following:
\begin{center}
    \begin{tikzpicture}
        \node (1) at (0, 0) {3.5};
        \node (2) at (1, 0) {2};
        \node (3) at (2, 0) {9};
        \node (4) at (3, 0) {11};
        \node (5) at (4, 0) {1};
        \node (6) at (5, 0) {$-2.2$};
        \node (7) at (6, 0) {5};

        \path[->] (2) edge (1);
        \path[->] (3) edge (4);
        \path[->] (6) edge (5);
    \end{tikzpicture}.
\end{center}
We only add 3 arcs this time which is approximately half the number of array elements.
Then we continue the same process as for the previous algorithm, and again with our last run of DFS, the stack gives us the sorted array:
\begin{align*}
    A = [-2.2, 1, 2, 3.5, 5, 9, 11].
\end{align*}
This completes the \textsc{GraphMergeSort} example.

\section{Practical Implementation}
In this section we will focus less on the theory, but more on the implementation of the theory discussed throughout this paper.
This is imperative to convey since graphs can be represented by many data structures, so understanding which methods are efficient.
Furthermore, we can implement multiple computational optimizations to reduce the memory load and operations required.
These were mostly skipped in the development of the theory aspect since they add unnecessary complexity and do not affect the conclusion vastly.
We will also discuss how we can tackle equal-valued elements in an array which hasn't been discussed yet since we have assumed distinct values up till now for the ease in proving correctness.
Also, we will explore some machine dependent issues that we could face and how we can tackle them.
Lastly we will examine practical applications of this algorithm and what advantages it brings to the table relative to other efficient sorting algorithms out there.

\subsection{Data Structures}
As we have used throughout the pseudo-code sections of the paper, we will use the adjacency list representation of a graph.
This is imperative since we traverse through vertices rather than probing on edges as a whole, so quickly accessing neighboring vertices is important.
Moreover, we require our adjacency list to be sorted according to corresponding value for both our algorithms.
We can implement this sorted invariant as shown before by how we add arcs to the graph.

Graphs are a combinatorial mathematical structure, so it is important we remain memory efficient in our representations.
Since every vertex has a limited out-degree for our Divide-and-Conquer algorithm for the least, an adjacency list for every vertex is thus more memory efficient.
We take $\Theta(n + m)$ space to represent a graph of order $n$ and size $m$.

\subsection{Memory Optimizations}
Following our focus on memory efficiency, when we run DFS for the Divide-and-Conquer algorithm, we have DFS generate our forest and replace the given graph with the resulting forest.
However, we do not necessarily need to run secondary DFS operations; we included it primarily to reduce complexity in proofs.
Since when we merge we essentially traverse the contained Hamiltonian paths of the components, any overhead of arcs in the component will not affect the merge traversal.
Thus, we eliminate the middle uses of DFS; we still need to run DFS on the corresponding graph and at the end to gain the topological sort.
Note this optimization comes with a grain of salt, since now our graph data structure will be larger in terms of memory, but we limit alterations in the data structure as we use only one rather than many (the intermediate forests).

In order to optimize the construction of the forest separately, we can also implement a strategy where we remove edges from the given graph while running DFS.
We won't go into detail for a game plan to accomplish this, but leave it to the reader to conceive.
In the same discussion of space, we can alter our method of adding such that we delete those unnecessary arcs simultaneously.
This can be done in situations where have a list of consecutive vertices that all create an arc to one vertex in the other component when we are ``moving'' down a component in the process of merging.
So instead of our process of generating an arc and then ``moving,'' we simply compare the next vertex we would ``move'' to with the vertex we would add an arc to, and if we know we will ``move,'' we skip the addition of that arc.
This will reduce the number of ``cross''-arcs we have after merging, but it is much harder to also reduce the number of original arcs that already existed but are deemed unnecessary after merging.
Moreover, this increases the number of comparisons to be made during merging too.
We leave the implementation of a solution to the reader for the latter more difficult part.

\subsection{Introducing Randomness}
In \textsc{QuickSort} if our pivot choice is not random, let's say we choose the last element, then it's easy to construct an array that will force \textsc{QuickSort} to run in worst-case time.
However we tackled this by choosing a random pivot element.
Similarly, our algorithm highly depends on the number of components in our first DFS forest.
As we saw in the worst-case we may end up with a situation of $n$ components.
Our algorithm has a DFS visit ordering that is arbitrary (in the first run), and for simplicity we just go from the first element to the last in order.
However, if we implement a completely random order we may achieve a more consistent run-time overall.

If we assume that a random visit ordering implies an equal probability that we have either $k = 1, \ldots, n$ components, then in the average-case we have $n / 2$ components which is similar to our MergeSort version.
However this implication is probably false since DFS will first visit all the reachable vertices from where it started, and the fashion it reaches those vertices and which vertices are visited in the process can alter the upcoming DFS visits.
Hence we save this uncertainty with randomness and the average-case run-time analysis for another time.

\subsection{Dealing With Equal Value Elements}
One of the biggest issues we have yet to discuss is what does the algorithm do for equal value elements.
All the theorems and algorithms in the paper as of now assume that our given array contains only distinct values.
This is important to satisfy our comparison property which is a strict inequality.
In the case of equal-valued elements, we compare the indices of those elements in the given array to determine which direction an arc between the two elements will go.
Since indices are necessarily distinct, we will never face an issue there, and having this second condition for equal-valued elements will still produce an order relation for our comparison property.
Note this issue is purely computational and does not hinder with the basic algorithmic process or theorems; we simply modify our ordering definition for a comparison graph.

\subsection{A System Bottleneck}
An issue witnessed during an implementation of this algorithm on massive arrays was that DFS would fail midway.
This is because DFS is recursive, and from the second DFS run and onward (in the Divide-and-Conquer algorithm) we are running DFS on disjoint paths essentially which leads to very deep recursions which may lead to a stack overflow on machines with limited memory.
We will discuss an iterative version of DFS solely for scenarios where we are dealing with forests of components that contain Hamiltonian paths, and in fact we notice that we can translate this for our first DFS run.

\subsection{An Iterative DFS Solution}
We will now discuss an iterative method of DFS for our particular case.
We assume that we are running DFS on a comparison graph $G$ such that every component of $G$ contains a Hamiltonian path.
Also assume we have a set of the roots of those Hamiltonian paths called $R$.
Recall that DFS will visit some start vertex, and then visit a vertex adjacent to the start, and so on.
Once all neighbors of the current vertex have already been visited, we ``back-track'' to its parent and check for the same.
Once we have reached the start vertex, and we have no adjacent unvisited vertices, we stop DFS starting from that root.
Then we let our new start be the next unvisited vertex in our visit list given and continue the same process till we visit all the vertices.

The biggest problem with this algorithm is that with our convention of the adjacency list, it becomes unnecessary to back-track as we have visited all the vertices in a component once we have the need to back-track since every component of $G$ contains a Hamiltonian path.
Therefore, an iterative solution would be to start at some vertex $x \in R$ and then continue to the first vertex $y$ in the sorted adjacency list of $x$, and then continue the first vertex $z$ in the sorted adjacency list of $y$, until we reach a vertex $w$ where $w$ has no adjacent vertices.
Then it is sufficient to conclude that we have discovered the Hamiltonian path starting at $x$ for the component since $G$ has no back-edges since it is a comparison graph.
We run this iterative process for all vertices in $R$, and the paths we generate are equal to the DFS forest as required.

Note we can only do this when we know every component contains a Hamiltonian path, so this can replace all the DFS runs after the first.
This can also translate to our first DFS run since we have shown that only roots of the trees in the resulting DFS forest may have two vertices adjacent to them.
Then we can start this iterative process whenever we start discovering the respective provisional sub-trees which we have shown are paths.
Thus we eliminate the recursive nature of DFS for our particular algorithm which allows for more versatility on machines with limited stack sizes.

\subsection{Graphical Approach Advantages}
During the discussion of the run-time of our algorithms, it was mentioned that computationally, our algorithm still lacks to perform as well as \textsc{QuickSort}, the leading sorting algorithm right now.
Also memory-wise, we construct a separate data structure to sort which adds extra overhead and cost.
However, there do exist some advantages of using the graphical approach.

Firstly, if there is a situation where we wish to sort a comparison graph itself, which matches our algorithm's objective essentially, we can employ our merge techniques and DFS to achieve a more true comparison graph than the original.
This eliminates the process of converting a graph into a linear structure to sort using a general sorting algorithm.
Now we do not need to transform the input as we can modify the graph itself.
Many applications use comparison graphs and directed graphs to represent networks and what not, and our algorithm provides a way to better detail those graphs and rank the nodes of those networks in an efficient manner.

Secondly, if the array distribution contains long increasing sub-sequences, our algorithm can take full advantage of this feature which algorithms like \textsc{QuickSort} and \textsc{MergeSort} fail to achieve.
Furthermore, some tailing computations as in other sorting algorithms can be eliminated since an edge can concatenate two ``sub-arrays.''
Additionally, since our best-case run-time is $\Theta(n)$ which is when the given array is already sorted, our algorithm performs extremely well in partially sorted arrays, similar to \textsc{InsertionSort}.
A complete analysis of the run-time in terms of inversions is not present in this paper and is saved for another time.

Lastly, some procedures presented in this paper can provide a better representation for some linear structures like linked-lists.
We need not necessarily sort the input, rather we can provide those intermediate forests if they are sufficient for a user.
This is useful in an ongoing insertion situation where more elements are being inserted and keeping the structure somewhat sorted is important.
Then at the end we may complete the algorithm and sort the array.
Note that the trueness of the forest will be approximately the same for every insert since fairly quickly we can choose where to insert the element.

\section{Similar Algorithm Ideas}
We will now discuss some ideas that may improve the algorithms discussed in this paper, and what parameters we can tweak to achieve vastly different results and uncover new problems.

First we can modify the reach values for our corresponding graphs.
In \textsc{GraphSort} we define a fixed reach of one, but there may be different implementations of algorithms with different reach values.
The higher our reach, the more complex our corresponding graph is, and the harder it is to parse and process the graph as graphs are combinatorial structures.

Additionally, we can implement a $k$-way merge process instead of our 2-way merge.
It doesn't seem to better asymptotic run-time in some preliminary analysis, but in terms of computation we may achieve more efficient algorithms because the logarithm base would be larger.
However, again implementing merges for multiple components turns into a very massive problem very quickly.

Moreover, we can expand on the idea of \textsc{GraphMergeSort} and instead of generating components by pairs of consecutive elements, we can generate components by triples of consecutive elements, or even higher $k$-tuples of consecutive elements.
In this way, we first solve each of the components in a specific way and then continue with our merge process.
This is a way to implement Divide-and-Conquer with the corresponding graph itself.

There are many conventions we have set for our algorithms, and we have absolute freedom to experiment with different conventions to realize newer algorithms employing comparison graphs.

\section{Conclusion}
In summary, this paper explored properties of comparison graphs, corresponding graphs, topological sorts, and DFS to fuse together procedures and algorithms that solve the old-age sorting problem.
Our best algorithm ran in time $\Theta(n\log n)$ in the worst-case and $\Theta(n)$ in the best-case.
Hence, it is on par with mainstream sorting algorithms although the graphical strategy seems more complex.
Further, it serves extremely well for particular distributions of arrays, and even for less efficient distributions it's competitive with other such algorithms.

To remind you again, sorting an array is simply an application of the techniques provided here.
There are many other applications of our procedures available with comparison graphs, in particular.
Our merge process is probably the most important and eye-opening part of this paper as it gives us a way to generate comparison graphs that are more true.
It retains all the properties and conditions of its original graph and generates a new one that is more detailed and concrete, and this procedure can be applied for solving many other problems.

This is the direct outcome of employing graphs since they provide another layer of complexity and information.
The algorithms presented took advantage of these properties of graphs to develop an interesting method to sort although it may even render to other problems.

Hopefully, the algorithms, procedures, and theorems discussed may inspire you, the reader, to embark on a discovery of more applications and algorithms using the ideas discussed in this paper as a foundation.

\section*{Acknowledgements}
I would like to thank Professor Patrick Tantalo of the Computer Science and Engineering department at the University of California, Santa Cruz.
His lectures and discussions inspired me to look into this method and write this paper.

I would also like to thank Professor Seshadhri Comandur of the Computer Science and Engineering department at the University of California, Santa Cruz.
He mentored me throughout the latter stages of the paper to refine and polish it.
Without his help I would not be able to finish this paper.

Further, the teaching of Nathan Marianovsky of the Mathematics department at the University of California, Santa Cruz, in graph theory gave me the perfect arsenal to formulate many of these ideas formally.

Lastly, I would like to sincerely thank Dr. Rajat K. Pal from the University of Calcutta for sharing insights of his previous work in this line of research and for kindly assiting me through the publishing process.

\nocite{clrs, cz, nm, pt1, pt2}
\printbibliography

\end{document}